\documentclass[a4paper,fleqn]{cas-dc}
\usepackage{amsmath,amssymb,amsfonts}
\usepackage{amsthm}
\usepackage{nicefrac}
\usepackage{algorithmic}
\usepackage{textcomp}
\usepackage{xcolor}
\usepackage[utf8]{inputenc} 
\usepackage[T1]{fontenc}
\usepackage{graphicx}
\usepackage[font=small]{caption}
\usepackage{tikz}
\usetikzlibrary{positioning}
\usepackage{pgfplots}
\usepackage{epstopdf}
\usepackage{placeins}
\usepackage{float}
\definecolor{darkgreen}{rgb}{0.0, 0.5, 0.0}
\usepackage[labelformat=simple]{subcaption}
\usepackage{multirow}
\usepackage{lipsum}
\usepackage{mathtools}
\usepackage{dsfont}
\usepackage{enumitem}

\def\BibTeX{{\rm B\kern-.05em{\sc i\kern-.025em b}\kern-.08em
    T\kern-.1667em\lower.7ex\hbox{E}\kern-.125emX}}

\input{mysymbol.sty}

\newtheorem{lemma}{Lemma}
\newtheorem{proposition}{Proposition}

\newtheorem{definition}{Definition}
\newtheorem{remark}{Remark}

\newtheorem{Problem}{Problem}

\pgfplotsset{compat=1.18}

\usepackage[numbers,sort&compress]{natbib}
\begin{document}
\let\WriteBookmarks\relax
\def\floatpagepagefraction{1}
\def\textpagefraction{.001}

\shorttitle{Ant Backpressure Routing for Dynamic Wireless Multi-hop Networks}
\shortauthors{Erfaniantaghvayi et al.}

\title[mode=title]{Ant Backpressure Routing for Dynamic Wireless Multi-hop Networks with Mixed Traffic Patterns}

\author[1]{Negar Erfaniantaghvayi}[orcid=0000-0002-8049-0067]
\ead{ne12@rice.edu}
\credit{Methodology, software, investigation, validation, writing -- original draft}

\author[1]{Zhongyuan Zhao}[orcid=0000-0003-0346-8015]
\cormark[1]
\ead{zhongyuan.zhao@rice.edu}
\credit{Conceptualization, methodology, software, formal analysis, supervision, writing -- original draft, writing -- review \& editing}

\author[2]{Kevin Chan}[orcid=0000-0002-6425-5403]
\ead{kevin.s.chan.civ@army.mil}
\credit{Conceptualization, validation, writing -- review \& editing}

\author[2]{Ananthram Swami}[orcid=0000-0003-1439-332X]
\ead{ananthram.swami.civ@army.mil}
\credit{Conceptualization, resources, writing -- review \& editing}

\author[1]{Santiago Segarra}[orcid=0000-0002-8408-9633]
\cormark[1]
\ead{segarra@rice.edu}
\credit{Supervision, project administration, funding acquisition, writing -- review \& editing}

\affiliation[1]{organization={Department of Electrical and Computer Engineering, Rice University},
                addressline={6100 Main Street}, 
                city={Houston},
                postcode={77005}, 
                state={TX},
                country={USA}}

\affiliation[2]{organization={DEVCOM Army Research Laboratory},
                addressline={2800 Powder Mill Road}, 
                city={Adelphi},
                postcode={20783}, 
                state={MD},
                country={USA}}

% \cortext[cor1]{Corresponding author at: Department of Electrical and Computer Engineering, Rice University, United States.}
\cortext[cor1]{Corresponding authors.}
% \cortext[cor2]{Corresponding author}

% \tnotemark[1]
% \tnotetext[1]{Research was sponsored by the DEVCOM ARL Army Research Office and was accomplished under Cooperative Agreement Number W911NF-24-2-0008 and W911NF-19-2-0269. The views and conclusions contained in this document are those of the authors and should not be interpreted as representing the official policies, either expressed or implied, of the Army Research Office or the U.S. Government. The U.S. Government is authorized to reproduce and distribute reprints for Government purposes notwithstanding any copyright notation herein.}

\fnmark[1]
\fntext[1]{Preliminary results were presented in IEEE MILCOM 2024~\cite{Erfaniantaghvayi2024}.}

% Option 1: paste the abstract here.
% Option 2: replace this block with 
\begin{abstract}
Backpressure (BP) routing and its shortest-path biased variant (SP-BP) provide powerful congestion-aware multipath resource allocation for wireless multi-hop networks, 
but they rely on per-commodity queueing and slot-by-slot control that may be difficult to realize under practical or legacy forwarding architectures. 
Moreover, even state-of-the-art SP-BP still suffers from the last-packet problem when short-lived traffic coexists with streaming flows.
To address these limitations, we propose Ant Backpressure (Ant-BP), a periodic and fully distributed routing scheme that decouples route learning from packet forwarding. 
Ant-BP uses virtual SP-BP to construct pheromone-based forwarding probabilities, while actual packets are forwarded through per-neighbor first-in-first-out (FIFO) queues with probabilistic next-hop selection.  
This architecture enables link-capacity sharing across commodities, mitigates starvation of short-lived traffic, and extends the benefits of SP-BP to network architectures based on per-neighbor FIFO forwarding. 
Through periodic virtual updates, Ant-BP also adapts to transient link failures and mobility-induced topology changes.
Our theoretical analysis and simulations show that, compared with conventional ant colony optimization (ACO) routing, virtual SP-BP enables Ant-BP to establish higher-quality forwarding policies with lower overhead.
As a result, Ant-BP improves latency and delivery ratio over SP-BP and ACO-based baselines under mixed streaming and bursty traffic, achieves throughput comparable to SP-BP at low and medium traffic load, and remains robust to mismatched virtual-traffic assumptions, transient link failures, and node mobility.
\end{abstract}

% Uncomment if required by the target Elsevier journal.
% \begin{highlights}
% \item Insert highlight 1.
% \item Insert highlight 2.
% \item Insert highlight 3.
% \end{highlights}

\begin{keywords}
Mobile ad-hoc networks \sep Biased Backpressure \sep Virtual routing \sep Ant colony optimization \sep Multi-commodity min-cost flow
\end{keywords}

\maketitle

% Main manuscript body: sections only, no \documentclass, no \begin{document}, no bibliography style.

\section{Introduction}
Wireless multi-hop networks have transcended their traditional applications in military communications, disaster relief, and wireless sensor networks, emerging as an enabler of next-generation (xG) networks. 
Infrastructure-light architectures are now essential to emerging paradigms in highly dynamic environments, such as integrated access and backhaul (IAB), non-terrestrial networks, device-to-device and massive machine-type communications, connected vehicles, and robotic swarms~\cite{lin2006tutorial,sarkar2013ad,Patriciello2016,kott2016internet,Cudak2021,akyildiz20206g}.
To ensure scalability and adaptivity, these applications increasingly rely on self-organizing, distributed solutions for resource allocation and network orchestration.
Backpressure (BP) routing and its variants~\cite{tassiulas1990stability,ryu2010back,georgiadis2006resource,neely2005dynamic,zhao2023icassp, zhao2023enhanced, zhao2024tmlcn,jiao2015virtual, moeller2010routing,Alresaini2016bp,ji2012delay,liaskos2023analysis} have emerged as promising solutions in this domain. 
In BP schemes, packets destined for a specific node constitute a commodity, requiring per-commodity queues at each node. 
In contrast to traditional decoupled protocols, BP makes joint routing and scheduling decisions through Max-Weight scheduling driven entirely by local queue differentials (pressure)~\cite{tassiulas1990stability}. 
By activating non-conflicting links that maximize instantaneous network pressure, BP dynamically forwards traffic along congestion gradients, effectively exploring all possible routes. 
Rooted in Lyapunov drift theory, this approach mathematically guarantees maximum queue stability~\cite{tassiulas1990stability,neely2005dynamic,georgiadis2006resource,zhao2024tmlcn}, providing the critical adaptivity needed to absorb demand shocks. 
Furthermore, with fully distributed greedy schedulers~\cite{joo2011local,zhao2022link}, BP schemes enable self-organizing architectures solely relying on local interactions among neighboring nodes.

Despite its theoretical elegance, the real-world deployment of classical BP routing is hindered by its high latency and cross-layer requirements. 
BP suffers latency degradation under low-to-medium traffic loads due to well-documented drawbacks: slow startup, random walks, and the last-packet problem~\cite{jiao2015virtual,Alresaini2016bp,ji2012delay,moeller2010routing}. 
While recent advances in shortest-path biased BP (SP-BP) routing have successfully mitigated the slow startup and random walk issues by shaping routes via distance bias fields~\cite{neely2005dynamic, georgiadis2006resource,zhao2023icassp,zhao2023enhanced,zhao2024tmlcn}, three structural challenges remain. 
First, short-lived traffic that lacks a persistent congestion gradient can be starved in the presence of heavy streaming flows (the last-packet problem).
Second, the exclusive commodity selection in BP under-utilizes link capacity under low-to-medium traffic~\cite{jiao2015virtual,Alresaini2016bp,ji2012delay,moeller2010routing,zhao2026ton}.
More critically, its cross-layer operation and memory-intensive per-commodity queueing system conflict with legacy network protocols with decoupled forwarding and route optimization, creating massive architectural friction~\cite{moeller2010routing,bui2011novel,Liaskos2023tnse}.

Real-world networks predominantly rely on standard forwarding architectures originally designed for stable wired infrastructure. 
Legacy Internet protocols (IP)~\cite{moy1998ospf} and Software-Defined Networks (SDNs)~\cite{mckeown2008openflow,kreutz2014software} utilize simplified, per-neighbor first-in-first-out (FIFO) queues driven by shortest-path routing or centralized traffic engineering~\cite{oliveira2011mathematical}. 
While perfectly suited for the reliable links of wired domains, their direct adoption into dynamic wireless multi-hop networks exposes severe limitations. 
Traditional IP and SDN assume static network topology and fast control plane, which break down quickly in mobile ad hoc networks (MANETs)~\cite{abolhasan2004review,chlamtac2003mobile,haque2016wireless,vinayakray2012routing, aggarwal2014performance}. 
Alternatively, bio-inspired distributed algorithms like Ant Colony Optimization (ACO)~\cite{di2005anthocnet, Purkayastha2013convergence,zhang2017survey,dorigo2019ant} enable probabilistic multipath routing, but suffer from prohibitively high control overhead and slow convergence during link failures or node mobility~\cite{vinayakray2012routing, aggarwal2014performance}.

To tackle these challenges, we propose Ant Backpressure (Ant-BP) routing, a hybrid approach that incorporates the effectiveness of SP-BP in gradient-based route discovery into standard forwarding architectures~\cite{Erfaniantaghvayi2024}. 
A lightweight SP-BP in the virtual plane periodically maintains pheromone policies for probabilistic forwarding, by exchanging only packet counts--without requiring feedback trips of scout ants in conventional ACO~\cite{di2005anthocnet, Purkayastha2013convergence,zhang2017survey,dorigo2019ant}. 
Real data packets are then forwarded through standard per-neighbor FIFO queues, enabling link-capacity sharing across commodities. 
Furthermore, by mapping bursty flows in data plane into streaming virtual traffic, Ant-BP effectively mitigates the last-packet problem in both virtual and data planes. 
Compared to conventional ACO, the gradient-based route discovery of Ant-BP incurs significantly lower overhead, making it adaptive to transient link failures and node mobility.

\vspace{1mm}
\noindent
\textbf{Contribution.} The contributions of this paper are fourfold:
\begin{enumerate}[leftmargin=1em,labelsep=0.2em,itemsep=0pt,topsep=1pt]
\item \textit{Decoupled hybrid architecture}: We propose Ant-BP, a distributed routing scheme that combines the gradient-driven route learning of SP-BP with lightweight, probabilistic packet forwarding. 
By using virtual SP-BP to construct pheromone policies and per-neighbor FIFO queues for physical forwarding, Ant-BP mitigates short-flow starvation and extends SP-BP-style intelligence to legacy-compatible network architectures.
\item \textit{Theoretical interpretation}: We theoretically ground Ant-BP by relating SP-BP to a stochastic cost-minimization problem in the fully dynamic regime. 
We demonstrate that Ant-BP heuristically approximates the corresponding long-term routing objective under a restricted FIFO forwarding regime, trading a reduced strict stability region for practical implementability.
\item \textit{Adaptive mechanisms for network dynamics}: We design mechanisms to make Ant-BP robust in dynamic wireless environments. 
These include periodic virtual route re-estimation, pheromone penalization to handle transient link failures, and mobility adaptation through remapping of stranded physical packets into the virtual state.
\item \textit{Comprehensive numerical evaluation}: Through extensive simulations, we show that Ant-BP improves end-to-end latency and delivery ratios over SP-BP and ACO-based baselines under mixed streaming and bursty traffic. 
Furthermore, it achieves throughput comparable to SP-BP at low-to-medium loads and remains highly robust to mismatched virtual-traffic assumptions, link failures, and mobility-induced topology changes.
\end{enumerate}

\vspace{1mm}
\noindent
{\bf Notation.} 
%The notational convention is as follows:
Operators 
$ (\cdot)^\top $, $ \odot $, and $ |\cdot| $ represent the transpose operator, Hadamard (element-wise) product operator, and the cardinality of a set.
$ \mathds{1}(\cdot) $ is the indicator function. Upright bold lower-case symbol, e.g., $\bbz$, denotes a column vector, and $\bbz_i$ denotes the $i$-th element of vector $\bbz$. 
Upright bold upper-case symbol $\bbZ$ denotes a matrix, whose element at row $i$ and column $j$ is denoted by $\bbZ_{ij}$, the entire row $i$ by $\bbZ_{i*}$, and the entire column $j$ by $\bbZ_{*j}$.
A capital calligraphic letter denotes a set, queue, or graph.
$ \ccalN_{\ccalG}(i) $ represents the set of immediate neighbors of node $i$ on graph $\ccalG$.

\begin{figure*}[!t]
    \includegraphics[width=0.95\linewidth]{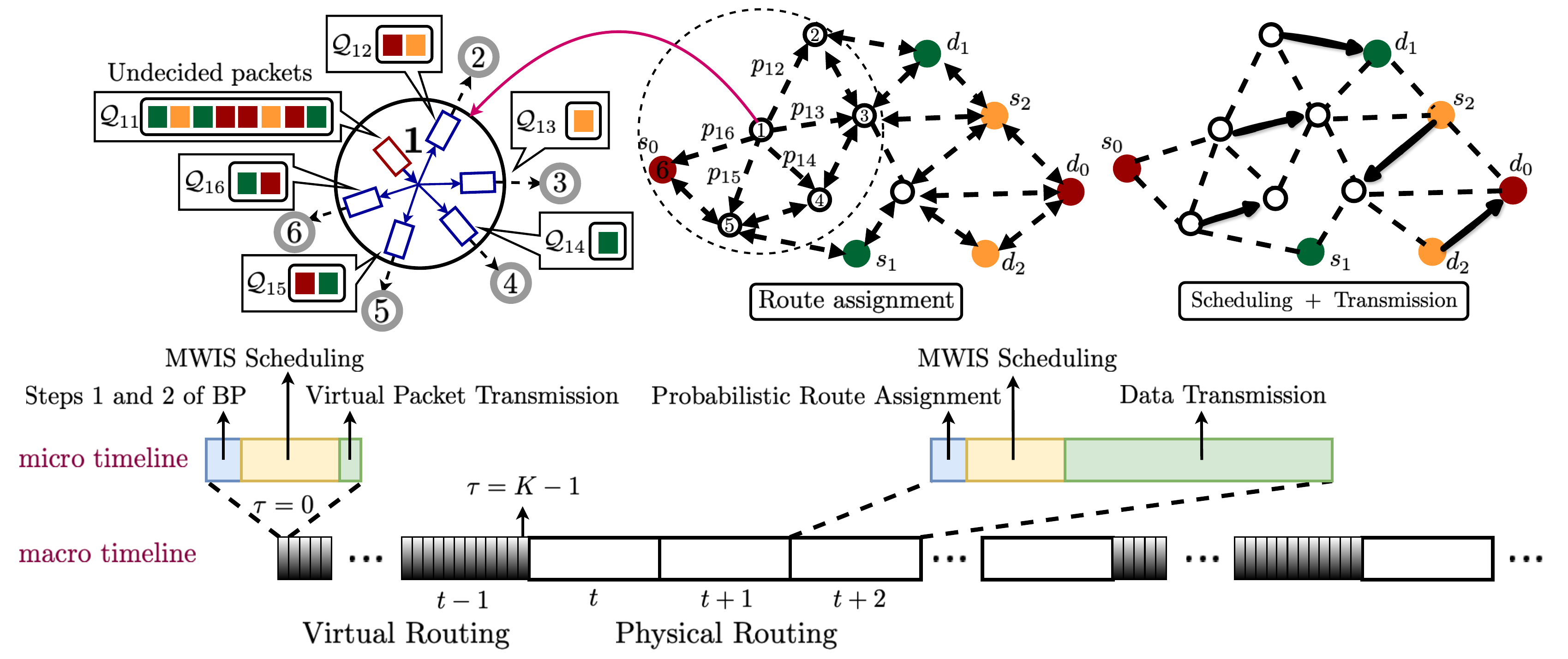}
    \vspace{-0.1in}
    \caption{Ant backpressure routing system diagram: queueing system design, operations, and timelines.
    }
    \label{fig:timeline}
\end{figure*}

\section{Related Work}\label{sec:survey}
Real-world wireless multi-hop networks often struggle to adopt standard forwarding architectures—such as legacy IP or Software-Defined Networks (SDNs)—because their reliance on static topologies and centralized control breaks down under intermittent connectivity. 
To address this, early work focused on topology-based routing protocols designed specifically for mobile ad hoc networks (MANETs)~\cite{abolhasan2004review, chlamtac2003mobile}. 
Representative examples include Dynamic Source Routing (DSR)~\cite{johnson1996dynamic, johnson2001dsr, leung2001mp} and Ad hoc On-demand Distance Vector (AODV)~\cite{lee2003scalability, saravanan2019modeling, marina2001demand, perkins1999aodv}, which establish routes through on-demand route discovery and maintain them using control messages. 
While such protocols are simple and scalable, they rely on predetermined, rigid paths that suffer from severe performance degradation in dynamic environments where link rates and congestion levels change rapidly~\cite{vinayakray2012routing, aggarwal2014performance}.

To improve robustness against such channel variability, opportunistic routing techniques were later proposed to dynamically select forwarding nodes based on link quality, congestion conditions, and priority~\cite{bhorkar2011adaptive, zhang2013overview,chakchouk2015survey}. 
Protocols such as ExOR~\cite{biswas2005exor} and CORMAN~\cite{wang2012corman} exploit the broadcast nature of the wireless medium to dynamically select next-hop relays, while subsequent congestion-aware variants like D-ORCD~\cite{bhorkar2015opportunistic, kathiravelu2011enhanced, shelke2017congestion} select forwarding nodes based on expected draining times. 
However, while these heuristic approaches increase transmission reliability, they lack the rigorous mathematical framework required to guarantee maximum network-wide queue stability under heavy congestion.

Another important class of routing strategies relies on bio-inspired swarm intelligence, particularly ACO \cite{di2005anthocnet, Purkayastha2013convergence, zhang2017survey, dorigo2019ant}, which discovers and maintains multiple paths in a distributed, probabilistic manner. 
Protocols like AntHocNet \cite{di2005anthocnet} combine reactive path discovery with proactive route maintenance, where artificial ants explore the network and update pheromone routing tables \cite{dorigo2019ant}. 
While various extensions attempt to improve path stability, delay performance, and energy efficiency in dynamic wireless environments~\cite{zhang2017survey, sharma2017survey}, conventional ACO algorithms face severe structural limitations. 
Specifically, because route exploration and maintenance require the continuous transmission of payload-heavy physical scout ants that must traverse the network and return to their sources, these protocols suffer from prohibitively high control overhead and slow convergence~\cite{Purkayastha2013convergence} in MANETs with transient link failures or node mobility. 

% Another important class of routing strategies in wireless networks relies on bio-inspired or swarm-intelligence-based algorithms. Among these, ACO~\cite{di2005anthocnet, zhang2017survey,dorigo2019ant} has received significant attention due to its ability to discover and maintain multiple paths in a distributed manner. 
% In ACO-based routing, artificial ants explore the network and update probabilistic routing tables through pheromone values that reflect the quality of discovered routes \cite{dorigo2019ant}. 
% AntHocNet is one of the most well-known ACO-based routing protocols for wireless ad hoc networks, combining reactive path discovery with proactive route maintenance to adapt to topology changes \cite{di2005anthocnet}. 
% A number of extensions and variants have been proposed to improve path stability, delay performance, and energy efficiency in dynamic wireless environments \cite{zhang2017survey, sharma2017survey}. 
% However, ACO-based routing protocols \update{mostly focus on practical challenges in reducing the overhead of route exploration and maintenance, while the underlying ACO algorithms suffer slow convergence in wireless networks}.
% typically rely on additional control packets for route exploration and maintenance, which can introduce non-negligible signaling overhead.

In parallel with these approaches, queue-based routing frameworks~\cite{rehman2016joint, naseem2017queue}--most notably BP routing~\cite{tassiulas1990stability,ryu2010back,georgiadis2006resource,neely2005dynamic,zhao2023icassp, zhao2023enhanced, zhao2024tmlcn,jiao2015virtual, moeller2010routing,Alresaini2016bp,ji2012delay,liaskos2023analysis}--emerged to jointly address routing and scheduling under wireless interference constraints. 
Due to the theoretical stability guarantees of BP, extensive research has focused on mitigating its high latency~\cite{ryu2010back,georgiadis2006resource,neely2005dynamic,zhao2023icassp, zhao2023enhanced, zhao2024tmlcn,jiao2015virtual, moeller2010routing,Alresaini2016bp,ji2012delay,liaskos2023analysis}. 
Proposed solutions include incorporating delay-based backlogs into pressure calculation and scheduling~\cite{maglaras2014delay,jiao2015virtual, hai2017delay}, using LIFO queues~\cite{moeller2010routing}, employing multi-stage scheduling with shortest-path metrics~\cite{shan2022tsbs}, and introducing (delay-aware) shortest-path bias fields to shape routing behaviors~\cite{georgiadis2006resource,neely2005dynamic,zhao2023icassp, zhao2023enhanced, zhao2024tmlcn}. 
Furthermore, to address the architectural friction of per-commodity queues, virtual or shadow queue architectures were proposed to mathematically decouple optimal route calculation from physical packet queues~\cite{bui2011novel,Liaskos2023tnse}.

Despite these advances, fundamental structural issues prevent real-world adoption. 
First, the last-packet problem persists due to per-commodity queue design~\cite{jiao2015virtual,Alresaini2016bp,ji2012delay,moeller2010routing,Erfaniantaghvayi2024} which becomes more significant as short-lived traffic dominates machine-type communications. 
More critically, the requirement of cross-layer control significantly hinders the deployment of BP schemes in networks relying on standard low-level network protocols (e.g., Ethernet) based on simple per-neighbor FIFO forwarding~\cite{Kawadia2005caution}.
To close the gap between these disjointed paradigms, we propose to combine the lightweight multipath route establishment of SP-BP with the FIFO-based probabilistic packet forwarding for dynamic wireless multi-hop networks.

\section{System Model}\label{sec:system}

\subsection{Connectivity and Conflict Topologies}
We represent a wireless multi-hop network by a directed connectivity graph $\mathcal{G}^n = (\mathcal{V}, \mathcal{E})$, in which a node $i\in\mathcal{V}$ denotes a transceiver in the network, and a directed link $e = (i, j) \in \mathcal{E}$, where $i, j \in \mathcal{V}$, indicates that packets can be transmitted from node $i$ to node $j$  over the air. 
We assume $\mathcal{G}^n$ to be connected, ensuring that any two nodes within the network can reach each other.
% To describe routing, we denote a directed link as  $(\overrightarrow{ i, j})$, representing data packets being transmitted from node $i$ to node $j$ via wireless link $(i, j)\in\ccalE$.
% Moreover, we consider a set of flows $\mathcal{F}$ within the network. 
% A flow $f = (i, c) \in \mathcal{F}$, where $i \neq c$ and $i, c \in \mathcal{V}$, represents data traffic originating from node $i$ and destined for node $c$, possibly through multiple hops. 

We use a conflict graph, $\mathcal{G}^c = (\mathcal{E}, \ccalH)$, to model the orthogonal access constraint among wireless links as follows: 
each vertex $e \in \mathcal{E}$ corresponds to a wireless link in the network $\mathcal{G}^n$, and the presence of an undirected edge $(e_1, e_2) \in \ccalH$ indicates that links $e_1$ and $e_2$ cannot be activated simultaneously, due to \textit{interface conflict} 
% We primarily focus on the interface conflict model, where the conflict graph is derived from the line graph of the connectivity graph. 
(two links sharing the same radio transceiver), or \textit{radio interference} (e.g., two links located within certain distance interfere with each other if activated at the same time).
% However, our approach also applies to other conflict models, such as the physical distance interference model.
For the rest of this paper, we assume the conflict graph $\mathcal{G}^c$ to be known, either through direct monitoring of the wireless channel by each link or through more advanced estimation as in \cite{zhao2022link, yang2016learning}.

\subsection{Data Plane Architecture}\label{sec:Queuesystem}
We consider a time-slotted medium access control (MAC) system with orthogonal multiple access. 
Each time slot $t$ comprises a resource allocation (routing and scheduling) stage, followed by data transmission.
Vector $\grave{\bbr}(t)=[\grave{\bbr}_{e}(t)\mid e\in\ccalE]\in\reals^{|\ccalE|}$ gathers the real-time link rates in time slot $t$, where
entry $\grave\bbr_{e}(t)$ denotes the number of packets that can traverse link $e\in\ccalE$ during time slot $t$.
% where $\grave{\bbr}_{ij}(t)$ denotes the instantaneous link rate of $(i,j)\in\ccalE$.
Vector $\bbr=\mathbb{E}_{t}[\grave{\bbr}(t)]\in\reals^{|\ccalE|}$ collects the long-term average link rates. 
% \red{The total number of time slots is denoted by $T$.
% The non-negative integer matrix $\mathbf{A} \in \mathbb{Z}_+^{|\mathcal{F}|\times T}$ collects the exogenous packet arrivals in real-time, where $A_{f,t}$ represents the number of packets arriving at the source node of flow $f$ at time slot $t$. 
% The matrix $\mathbf{R} \in \mathbb{Z}_+^{|\mathcal{E}|\times T}$ gathers the stochastic real-time link rates. 
% Each entry $\grave{\bbr}_{e,t}$ denotes the number of packets that can traverse link $e\in\ccalE$ during time slot $t$ and is presumed to exhibit symmetry in both directions.}

The packets destined for node $c\in\ccalV$ are designated as commodity $c$,
$A_i^{(c)}(t)$ denotes the exogenous arrival of commodity $c$ on node $i$ during slot $t$, and the arrival rate $\lambda_i^{(c)}=\mathbb{E}\big[A_{i}^{(c)}\!(t)\big]$.
% $\lambda_i^{(c)}=\mathbb{E}[A_i^{(c)}(t)]$.
The network follows a decoupled routing and scheduling architecture illustrated in Fig.~\ref{fig:timeline}, which abstracts many real-world wireless ad-hoc or mesh networks, and is similar to the Max-Pressure~\cite{varaiya2013max,Levin2023MP} in road networks.
In particular, a commodity-blind Max-Weight link scheduling activates a set of non-conflicting, most congested links in each time slot, for maximum queue stability under an independent, probabilistic packet routing policy.

% \subsection{Per-Neighbor Queueing System}
\vspace{1mm}
\noindent\textbf{Per-neighbor queueing system:}
% In our proposed routing scheme, 
Each device $i\in\ccalV$ hosts per-neighbor queues, denoted as $\ccalQ_{ij}$, which buffers packets to be sent to each neighboring device $j\in\ccalN_{\ccalG^n}(i)$.
In addition, we use queue $\ccalQ_{ii}$ to buffer newly arrived exogenous (injected by user $i$) and endogenous (from neighbors) packets 
% from user $i$ (exogenous) or other devices (endogenous) 
of which the next hops are undecided yet.
All the queues follow the FIFO principle, and the queue length of $\ccalQ_{ij}$ at the beginning of time slot $t$ is denoted by $q_{ij}(t)$.

To facilitate our proposed virtual queue solution and theoretical analysis, we also denote the backlog size of a commodity $c$ on device $i$ with $Q_{i}^{(c)}(t)$, e.g., as packet counts under the data-plane per-neighbor queues.

\vspace{1mm}
\noindent\textbf{Probabilistic forwarding:}
At the beginning of a time slot $t$ and on each node $i\in\ccalV$, every packet from $\ccalQ_{ii}$  is independently sent to queue $\ccalQ_{ij}$ associated with one of its neighbors $j\in\ccalN_{\ccalG^n}(i)$ according to an established per-commodity probabilistic routing policy, $ \big[p^{(c)}_{ij}(t) \mid j\in\ccalN_{\ccalG^n}(i), c\in\ccalV \big]$, where $ 0\leq p^{(c)}_{ij}(t)\leq 1, \sum_{j\in\ccalN_{\ccalG^n}(i)} p^{(c)}_{ij}(t) = 1 $.

\vspace{1mm}
\noindent\textbf{MaxWeight scheduling:}
After forwarding, the utility vector $\bbu(t)\in\reals_{+}^{|\ccalE|}$ is computed, where per-link utility is
\begin{equation}\label{eq:utility}
    % u_{ij}(t)=\max\{q_{ij}(t),q_{ji}(t)\}\grave{\bbr}_{ij}(t)\;.
    u_{ij}(t)=q_{ij}(t)\grave{\bbr}_{ij}(t)\;,\quad \forall (i,j)\in\ccalE.
\end{equation}
% The direction of transmission for link $(i,j)$ is selected accordingly by the $\max$ operation.
Next, Max-Weight scheduling finds the schedule $\bbs(t)$ based on the utility vector $\bbu(t)$ and conflict graph $\ccalG^c=(\ccalE,\ccalH)$, 
\begin{subequations}\label{eq:mwis}
\begin{align}
    & \bbs(t) = \argmax_{\tilde\bbs(t) \in \{0, 1\}^{|\mathcal{E}|}} \bbs(t)^\top \bbu(t),\\
    \text{s.t.}\quad & \tilde s_{e_1}(t) + \tilde s_{e_2}(t)\leq 1,\; \forall\; (e_1, e_2)\in \ccalH \;.
\end{align}
\end{subequations}
Notice that the Max-Weight scheduling in (\ref{eq:mwis}) involves solving a maximum weighted independent set (MWIS) problem on the conflict graph, which is NP-hard \cite{joo2010complexity}. 
Therefore, in practice, (\ref{eq:mwis}) is solved approximately by heuristics, such as centralized greedy maximal scheduler (GMS), distributed local greedy scheduler (LGS) \cite{joo2011local}, and GNN-enhanced LGS \cite{zhao2022link}. 
In this paper, we choose LGS~\cite{joo2011local} as our distributed Max-Weight scheduler for its simplicity. 
The number of packets to be transmitted on link $(i,j)$ in time slot $t$ is 
$$
\mu_{ij}(t) = s_{ij}(t)\min\{q_{ij}(t), \grave{\bbr}_{ij}(t)\}.
$$
The queue lengths $q_{ij}(t)$ are initialized as $q_{ij}(0)=0$ and evolve such that, at each time slot $t$, the queue length $q_{ij}(t)$ increases by the number of packets routed from node $i$ to queue $\ccalQ_{ij}$ under the probabilistic routing policy, and decreases by $\mu_{ij}(t)$ packets transmitted over link $(i,j)$.

\vspace{1mm} 
\noindent\textbf{Route optimization}
under the decoupled routing and scheduling architecture can be stated as follows:
% , we formulate routing as finding an optimal probabilistic forwarding policy:
\begin{Problem}
We seek to establish a probabilistic policy, 
$$ 
P=\big[p^{(c)}_{ij}(t)\mid {(i,j)\in\ccalE}, c\in\ccalV \big]\;,
$$ 
% that can minimize the latency, maximize the network throughput, and/or adapt to network dynamics.    
that forwards packets into per-neighbor FIFO queues, which are scheduled for transmissions by a MaxWeight scheduler defined in~\eqref{eq:utility} and~\eqref{eq:mwis}. 
The goal is to balance multi-commodity traffic across the network for maximum throughput, minimum end-to-end latency, and rapid adaptation to dynamics such as link failures and node mobility.
\end{Problem}
A formal formulation is detailed in Section~\ref{sec:theory}, equations \eqref{eq:P0_policy} and \eqref{eq:P1_mcmcf}.
Unlike routing in wired networks that can be solved efficiently using all-pairs shortest path algorithms or linear multi-commodity min-cost flow (MCMCF) formulations, route optimization in wireless networks is subject to conflict constraints modeled by $\ccalG^c$, e.g., MaxWeight scheduling in~\eqref{eq:mwis} is NP-hard. 
This makes it an NP-hard non-linear MCMCF problem where link capacity and unit cost depend on the decision variables, i.e., flow rate assignments.
As a result, heuristics like ACO routing (Appendix~\ref{app:aco}) are studied.
While bio-inspired ACO routing can operate in a fully distributed manner, it suffers from low convergence, limited adaptivity to node mobility, and lack of mathematical rigor in route optimization.

% Based on the link utility defined in \eqref{eq:utility}, the number of packets transmitted on an activated link $({ i, j})$ is $\min\{q_{ij}(t), \grave{\bbr}_{ij}(t)\}$.

\section{Ant Backpressure}
% BP routing exhibits two primary weaknesses.
% Firstly, the per-commodity queueing system in BP routing can lead to the last-packet problem for short-lived traffic types lacking new arriving packets to drive existing ones through the network~\cite{Alresaini2016bp,ji2012delay}.
% Secondly, the lack of link capacity sharing among different commodities can diminish the utilization efficiency of spectrum resources, particularly in the presence of many short-lived flows. 
% To mitigate these shortcomings, our Ant-BP scheme adopts the per-neighbor queueing system described in Section~\ref{sec:Queuesystem}.

To address the slow convergence of ACO, Ant-BP adopts a per-commodity routing policy similar to \eqref{eq:sgs},
\begin{equation}\label{eq:policy}
p^{(c)}_{ij}(t) = \frac{\rho^{(c)}_{ij}(t)}{\sum_{l\in\ccalN_{\ccalG^n}(i)} \rho^{(c)}_{il}(t)}\;, \quad \forall i,c\in\ccalV\;,   
\end{equation}
while updating the pheromone intensity $\rho^{(c)}_{ij}(t)$ periodically via virtual routing described in Sections~\ref{sec:BP} and~\ref{sec:vrouting}, which remains the same until its next update. 
Ant-BP seeks to address the establishment of pheromone policy in a fully distributed manner with fast convergence.

% consider ``physical routing" and explained as follows. 

% Next, we explain the establishment of a routing policy. 

% \subsection{Per-Commodity Virtual Queuing System}\label{sec:vqueue}

\subsection{Virtual SP-BP for Pheromone Establishment}
\label{sec:BP}

The pheromone in~\eqref{eq:policy} is established by SP-BP routing~\cite{neely2005dynamic,georgiadis2006resource,zhao2023icassp,zhao2023enhanced,zhao2024tmlcn} on a virtual plane, which operates on a virtual (shadow) per-commodity queuing system, denoted as $\{\tilde{\ccalQ}_i^{(c)} ~|~ i, c \in \mathcal{V}\}$, where $ \tilde{\ccalQ}_i^{(c)} $ is the virtual queue hosted on device $i$ for commodity $c$. 
% Besides the physical queuing system, we also adopt a per-commodity virtual queuing system for virtual routing.
A virtual queue $ \tilde{\ccalQ}_i^{(c)} $ only counts the number of packets for its commodity $c$ in the virtual plane without storing any payload, and its queue length at virtual time step $ \tau $ is denoted as $\tilde Q_i^{(c)}(\tau)$. 
Here, $\tau$, rather than $t$, is adopted as virtual routing operates on a time scale different from the physical time slot $t$.

To illustrate the operations of SP-BP, we define biased pressure as $ \tilde U_{ij}^{(c)}(\tau) = \tilde U^{(c)}_i(\tau)-\tilde U^{(c)}_j(\tau) $, where $\tilde U^{(c)}_i(\tau) = \tilde Q^{(c)}_i(\tau) + B^{(c)}_i$, and $B^{(c)}_i \geq 0$ is a queue-agnostic bias representing the shortest path distance from node $i$ to $c$ on (edge weighted) graph $\ccalG^n$. 
The bias field $\{B^{(c)}_i | i,c\in\ccalV\} $ is established through all-pairs-shortest-path (APSP) algorithm on the connectivity graph $\ccalG^n$ with edge weights $\left[ \delta_e |e\in\ccalE \right]$ defined by link features~\cite{zhao2023icassp,zhao2023enhanced,zhao2024tmlcn}.
% e.g., $ \delta_e=\bar{r}r_{\max}/r_e $, where $r_e=\lim_{T\rightarrow\infty}\mathbb{E}_{t\leq T}(\grave{\bbr}_{e}(t)) $ is the long-term link rate for $e\in\ccalE$, $\bar{r}=\mathbb{E}_{e\in\ccalE}(r_e)$, and $ r_{\max}=\max_{e\in\ccalE} r_e $.
% Link duty cycle could be incorporated into $\delta_e$ for high interference density.
To accommodate network mobility, $\{B^{(c)}_i | i,c\in\ccalV\} $ can be updated periodically.

In each time $\tau$, the virtual SP-BP operates in four steps: 
\vspace{1mm}
\noindent
\textbf{Step 1:}
the optimal commodity $c^*_{ij}(\tau)$ for each link $({ i, j})$ is selected as the one with the maximal pressure:
\begin{equation}\label{eq:optimalcomm}
    c^*_{ij}(\tau) = \underset{c \in \ccalV}{\text{argmax}}\; \{ \tilde U^{(c)}_{ij}(\tau)\}\; .
\end{equation}
% which can be based on hop counts \cite{neely2005dynamic, georgiadis2006resource} or more sophisticated edge weights using link features and graph-based machine learning~\cite{zhao2023icassp,zhao2023enhanced,zhao2024tmlcn}.
% In this paper, we adopt the latter. 
\vspace{1mm}
\noindent
\textbf{Step 2:}
the utility of each link $({ i, j})$ is found as~\cite{zhao2024tmlcn}
\begin{subequations}
\begin{align}
    \grave u_{ij}(\tau) = & ~\grave{\bbr}_{ij}(\tau) {w}_{ij}(\tau)\;, \label{eq:util} \\
    w_{ij}(\tau) = & \max\!\left\{\tilde U^{(c^*_{ij}(\tau))}_{ij}\!(\tau), 0\!\right\}{\cdot\mathds{1}\left[\tilde Q_i^{(c_{ij}^*(\tau))}\!(\tau)\!>\!0\right]},\label{eq:backlog}
\end{align}
\end{subequations}
where vector $\grave{\bbr}(\tau)$ collects the virtual real-time link rate of all links at step $\tau$, and $\grave\bbr_{e}(\tau)$ is generated from the same distribution as $\grave{\bbr}_{e}(t)$ to reflect the characteristics of underlying wireless channel.
% vector ${\bbw}(\tau) = [{w}_{ij}(\tau)|(i, j) \in \mathcal{E}]$.  

\vspace{1mm}
\noindent
\textbf{Step 3:}
Max-Weight scheduling finds the schedule $\grave\bbs(\tau) \in \{0, 1\}^{|\mathcal{E}|}$ to activate a set of non-conflicting links as, 
% where the per-link utility is $u_{ij}(\tau) = \grave{\bbr}_{ij}(\tau) {w}_{ij}(\tau)$, 
% with  $\widetilde{w}_{ij} = \max\{w_{ij}(\tau), w_{ji}(\tau)\}$,
\begin{subequations}\label{eq:bp:mwis}
\begin{align}
    & \grave\bbs(\tau) = \argmax_{\tilde\bbs(\tau) \in \{0, 1\}^{|\mathcal{E}|}} \bbs(\tau)^\top \grave\bbu(\tau),  \\
    \text{s.t. }& \tilde s_{e_1}(\tau) + \tilde s_{e_2}(\tau)\leq 1,\; \forall\; (e_1, e_2)\in \ccalH \;.
\end{align}    
\end{subequations}
% and the direction of the link selected by the max function will be recorded for step 4. 
The Max-Weight scheduler is also selected as LGS in~\cite{joo2011local}.

\vspace{1mm}
\noindent
\textbf{Step 4:}
all of the virtual real-time
link rate $\grave{\bbr}_{ij}(\tau)$ of a scheduled link is allocated to its optimal commodity $c^*_{ij}(\tau)$. 
The final transmission assignments of commodity $c \in \mathcal{V}$ on link $({ i, j})$ is
\begin{equation}\label{eq:trans}
\begin{split}
    \mu^{(c)}_{ij}(\tau) =& \min\{\tilde Q_{i}^{(c)}(\tau), \grave{\bbr}_{ij}(\tau)\} \cdot \grave \bbs_{ij}(\tau) \\ 
    & \cdot \mathds{1}\{c = c^*_{ij}(\tau)\} \cdot \mathds{1}\{w_{ij}(\tau) > 0\}\;.
    % \mu^{(c)}_{ij}(\tau) = \begin{cases} \grave{\bbr}_{ij}(\tau), & \text{if } c = c^*_{ij}(\tau), w_{ij}(\tau) > 0, s_{ij}(\tau) = 1, \\ 0, & \text{otherwise.} \end{cases}    
\end{split}
\end{equation}

\vspace{1mm}
\noindent\textbf{Pheromone establishment:}
The pheromone intensity in \eqref{eq:policy} is then established as 
\begin{equation}\label{eq:pheromone}
    \rho^{(c)}_{ij}(\tau) = {\max\{n^{(c)}_{ij}(\tau) - n^{(c)}_{ji}(\tau), 0\}} + \epsilon\;,
\end{equation} 
where the small constant $\epsilon>0$ ensures $ p^{(c)}_{ij}(\tau) = 1/|\ccalN_{\ccalG^n}(i)|$ when the first term in \eqref{eq:pheromone} is zero for all $ l\in\ccalN_{\ccalG^n}(i) $.
In \eqref{eq:pheromone}, $n^{(c)}_{ij}$ is the cumulative number of virtual packets of commodity $c$ that have traveled across link $({ i, j})$ during virtual SP-BP routing,
%and in Section~\ref{sec:BP}. 
% The number of virtual packets of commodity $c$ traversing across link $({ i, j})$ accumulates 
computed as: $ n_{ij}^{(c)}(0)=0 $, and 
$$ 
n_{ij}^{(c)}(\tau) = (1-\varepsilon)\cdot n_{ij}^{(c)}(\tau-1) + \mu^{(c)}_{ij}(\tau), 
$$ 
where the evaporation rate is typically set as zero ($\varepsilon=0$).
Similarly, all virtual queues are initialized as empty, i.e., $\tilde{Q}_{i}^{(c)}(0)=0$ at the beginning of virtual SP-BP.

In virtual SP-BP routing, no packets are actually generated or transmitted, but only the number of virtual packets of each commodity at each node $i\in\ccalV$ and transmitted over each link $({i,j})$ is tracked and exchanged across the network. 
As a result, the data transmission of virtual routing can be compressed into a very short duration, such that tens or hundreds of time steps of virtual SP-BP routing can be finished in a single physical time slot.
This allows our routing policy to be established within a few time slots, as illustrated by the timelines in Fig.~\ref{fig:timeline}.

The required number of time steps for virtual SP-BP routing to establish our routing policy is denoted as $K$ and configured as a system parameter via trial-and-error. 

\subsection{Virtual Traffic Configuration}\label{sec:vrouting}

In virtual routing, each physical traffic demand (e.g., $\lambda_{i}^{(c)}$) is mapped to a virtual packet injection sharing the same source and destination. 
However, their temporal profiles (rates and durations) can be deliberately decoupled. 
Depending on our knowledge of the physical traffic, we configure the virtual plane using one of two strategies: either perfectly replicating the exact physical rates and durations, or generalizing all injections as continuous streaming traffic (e.g., mapping a short-lived physical burst to a persistent virtual stream). 
Moreover, because virtual routing strictly precedes physical transmission, exact slot-by-slot arrival sequences are fundamentally unknowable a priori. 

Therefore, in Ant-BP, virtual packets are generated at the source nodes via a Poisson process matching the configured target rate. 
Because SP-BP route optimization is inherently agnostic to exact arrival processes~\cite{neely2010stochastic}, this implementation simplifies virtual generation while still establishing a robust, high-quality routing policy.

\subsection{Dynamic Network Adaptation}\label{sec:adaptation}
In dynamic networks, link availability fluctuates due to various environmental and physical factors. 
We introduce additional measures to make Ant-BP adaptive to transient link failure and topology-altering network mobility. 
While both disrupt packet transmission, their differing timescales and structural impacts necessitate distinct treatments.

\textbf{Transient Link Failure:}
A transient link failure is a temporary loss of channel capacity on an existing link $e$, causing scheduled transmissions to momentarily fail. 
Despite this disruption, the affected link $e$ remains a valid edge on the connectivity graph $\mathcal{G}^n$ and is expected to recover. 
To capture the impact of interference and channel fading, these failure events are modeled as a stochastic process characterized by Poisson arrivals and random durations.

When a transmission fails, the packets scheduled on the affected outgoing link are reset to an unscheduled state for that time step.
Furthermore, the pheromone level on the failed outgoing link is decayed by a tunable parameter (e.g., $5\%$), discouraging its selection in subsequent slots.

\textbf{Network Mobility:}
Network mobility involves the physical movement of nodes, that can potentially lead to permanent topology changes, i.e., existing links break and/or new links form on the connectivity graph $\mathcal{G}^n$.

To manage the disruptions caused by these topological shifts, our protocol reacts at both the link and packet levels. 
Structurally, broken links are immediately discarded from the routing pool, while newly established links are initialized with a small constant pheromone value $\epsilon>0$. 
If a packet is scheduled on a link that subsequently breaks, its status is immediately reverted to undecided and moved back to $\ccalQ_{ii}$.

However, simply unscheduling packets can lead to severe queuing delays or leave packets stranded at affected nodes. 
To prevent this, we treat commodities stranded at these nodes as virtual sources during the subsequent periodic virtual SP-BP phase to establish new pheromone policies towards their destinations. 
This process enables the system to organically discover new routes for stranded packets, integrate newly formed links, and adapt to the shifted traffic load. 
While our focus is on spatial node movement, this approach extends naturally to mobility events involving node addition or removal,
with the caveat that packets buffered at removed nodes are unavoidably lost.

\section{Theoretical Analysis}\label{sec:theory}

Depending on the network architecture, routing policies process network state information differently. 
Fully dynamic routing, such as SP-BP, utilizes per-commodity queues to make routing and scheduling decisions that are distinctly per-commodity and slot-by-slot, driven by instantaneous queueing states. 
In contrast, Ant-BP decouples route discovery from packet forwarding. It periodically updates a stationary probabilistic forwarding policy, and applies this policy to a constrained per-neighbor FIFO queueing architecture.

\subsection{Fully Dynamic Routing}

To theoretically ground Ant-BP, we first formulate the joint routing and scheduling task under the fully dynamic regime as a stochastic cost minimization problem. While Ant-BP addresses mixed traffic patterns, we restrict this formulation to strictly streaming traffic to preserve the analytical tractability.

Let $Q_i^{(c)}(t)$ be the per-commodity queue length at node $i$.
Assuming that packets reaching their destinations are consumed immediately, i.e., $Q_c^{(c)}(t)=0$, the queue dynamics on a non-destination node $ i\neq c$ follows:
\begin{equation}\label{E:queue}
Q_{i}^{(c)}\!(t+1) = Q_{i}^{(c)}\!(t) - \!\!\!\!\!\sum_{j\in\ccalN_{\ccalG^n}(i)}\!\!\!\! \mu_{ij}^{(c)}\!(t) + \!\!\!\!\!\sum_{j\in\ccalN_{\ccalG^n}(i)} \!\!\!\!\mu_{ji}^{(c)}\!(t) + A_{i}^{(c)}\!(t)   ,
\end{equation}
where $A_i^{(c)}(t)$ is the exogenous arrivals of commodity $c$ on device $i$, and the decision variables at each time step $t$ is the commodity-link rate assignment collected in
$$
\bbM(t)=\left[\mu_{ij}^{(c)}(t)\right]_{(i,j)\in\ccalE,\;c\in\ccalV}\in\bbPi,
$$ 
where $\bbPi$ denotes the feasible solution space under scheduling constraints, e.g., flow conservation, link capacity, and non-conflicting schedules as follows:
\begin{subequations}\label{E:pi}
\begin{align}
\sum_{j\in\ccalN_{\ccalG^n}(i)} \mu_{ij}^{(c)}\!(t) \leq Q_i^{(c)}(t)\;, &\quad \forall c\in\ccalV, (i,j)\in\ccalE, \label{E:pi:conserve}\\
\sum_{c\in\ccalV} \mu_{ij}^{(c)}\!(t) \leq \grave{\bbr}_{ij}(t) \;, &\quad \forall (i,j)\in\ccalE,\label{E:pi:capacity} \\
s_{e_1}(t) + s_{e_2}(t) \leq 1\;, &\quad \forall\; (e_1,e_2)\in\ccalH , \label{E:pi:conflict}
\end{align}
\end{subequations}
where $s_{e}(t)\in\{0,1\}$ indicates if link $e$ is scheduled at step $t$
$$
s_{e}(t) = \mathds{1}\left[\sum_{c\in\ccalV} \mu_{e}^{(c)}(t)\right]\;.
$$
We define the routing cost at time step $t$ as
\begin{equation}\label{eq:slot_cost}
g(t)= \sum_{(i,j)\in\ccalE}\sum_{c\in\ccalV} \mu_{ij}^{(c)}(t)\,B_{ji}^{(c)},
\end{equation}
where $B_{ji}^{(c)} = B_j^{(c)} - B_i^{(c)}$ represents the change in the remaining distance of commodity $c$ (towards its destination $c$) transmitted from $i$ to $j$ (negative values correspond to distance reduction), and $g(t)$ is the change in the total remaining distance of all packets from decisions $\bbM(t) $. 

\vspace{1mm}
\noindent
\textbf{Stochastic Cost Minimization Formulation.}
For a given demand profile $\bblambda$ within the network capacity region, i.e., $ \bblambda=\big[\lambda_i^{(c)}\big]_{i,c\in\ccalV}\in\text{int}(\Lambda)$,
% where $\lambda_i^{(c)}=\mathbb{E}\big[A_{i}^{(c)}\!(t)\big]$, 
the optimal routing and scheduling policy $\pi^*$ is formulated as one that minimizes time-average cost under stability and scheduling constraints:
\begin{subequations}\label{eq:P0_policy}
\begin{align}
\min_{\pi} &\quad \bar g(\pi)\triangleq \limsup_{T\to\infty}\frac{1}{T}\sum_{t=0}^{T-1}\mathbb{E}\!\left[g(t)\mid \pi\right] \label{eq:P0:obj}\\
\text{s.t.}\quad & \{Q^{(c)}_{i}(t)\}\ \text{are strongly stable} ,\label{eq:P0:stability}\\
& \bbM^{\pi}(t)=\left[\mu_{ij}^{(c),\pi}(t)\right]_{(i,j)\in\ccalE,\ c\in\ccalV} \in\bbPi ,\ \forall t,\label{eq:P0:scheduling}\\ 
& \bbM^{\pi}(t)=\pi\left(\bbQ(t),\grave{\bbr}_{*}(t) \mid \lambda, \ccalG^n,\ccalG^c,\ccalI(t\!-\!1)\right),
\end{align}
\end{subequations}
where $ \ccalI(t-1)=\left\{\bbQ(t'),\grave{\bbr}_{*,t'},\bbM^{\pi}(t')\right\}_{t'=0,\dots,t-1} $ encloses the trajectories of past queueing states, link rates, and decisions.
Notice that $\ccalI(t-1)$ is optional for policy $\pi$.

\begin{proposition}
For Problem~\eqref{eq:P0_policy} in the fully dynamic regime, the SP-BP policy $\pi^s$ is near-optimal for any fixed admissible demand profile $ \lambda \in\text{int}(\Lambda)$.
\end{proposition}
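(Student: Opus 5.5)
The plan is the standard Lyapunov drift-plus-penalty argument of Neely: show that SP-BP is exactly the policy that greedily minimizes a one-slot upper bound on drift-plus-penalty, with penalty given by the routing cost $g(t)$ in \eqref{eq:slot_cost}. Then compare that bound against a stationary randomized policy that is optimal for \eqref{eq:P0_policy}. The conclusion we aim for is the usual $[O(1/V), O(V)]$ cost--backlog tradeoff. Here ``near-optimal'' means that $\bar g(\pi^s)$ is within an additive constant (scaling as $1/V$ for bias magnitude $V$) of the optimal $\bar g^*$, while all queues stay strongly stable.

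\textbf{Step 1 (drift bound).} Take $L(\bbQ(t))=\tfrac12\sum_{i,c}(Q_i^{(c)}(t))^2$ and $\Delta(t)=\mathbb{E}[L(\bbQ(t+1))-L(\bbQ(t))\mid\bbQ(t)]$. Squaring \eqref{E:queue} and using bounded second moments of $A_i^{(c)}(t)$ and $\grave{\bbr}(t)$ (so that $\mu$ is bounded), we obtain
\begin{equation*}
\Delta(t)\le C+\sum_{i,c}Q_i^{(c)}(t)\lambda_i^{(c)}-\mathbb{E}\Big[\sum_{(i,j),c}\mu_{ij}^{(c)}(t)\big(Q_i^{(c)}(t)-Q_j^{(c)}(t)\big)\Bigm|\bbQ(t)\Big].
\end{equation*}
We then add the penalty $V\,\mathbb{E}[g(t)\mid\bbQ(t)]=V\,\mathbb{E}[\sum\mu_{ij}^{(c)}(B_j^{(c)}-B_i^{(c)})\mid\bbQ(t)]$. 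After absorbing $V$ into the bias (scaling $B\leftarrow VB$), the right-hand side contains $-\sum\mu_{ij}^{(c)}(U_i^{(c)}-U_j^{(c)})$ with $U=Q+B$. Minimizing this over $\bbM\in\bbPi$ decomposes as follows. On each link, pick the commodity maximizing $U_{ij}^{(c)}$; that is Step~1, \eqref{eq:optimalcomm}. Weight the link by $\grave{\bbr}_{ij}\max\{U_{ij}^{(c^*)},0\}$; that is Step~2. Solve the MWIS; that is Step~3 and \eqref{eq:bp:mwis}. Finally allocate the full rate; that is \eqref{eq:trans}. So SP-BP minimizes the bound slot by slot, up to an additional constant $C'$ coming from the truncation $\min\{Q,\grave{\bbr}\}$ and the empty-queue indicator.

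\textbf{Step 2 (comparison policy).} Since $\bblambda\in\mathrm{int}(\Lambda)$, the standard characterization of the capacity region shows there exists a stationary randomized policy $\omega$, depending only on $\grave{\bbr}(t)$, that achieves $\bar g^*$ (or $\bar g^*+\delta$). This policy satisfies $\mathbb{E}[\sum_j\mu^{\omega}_{ji}-\sum_j\mu^{\omega}_{ij}]+\lambda_i^{(c)}\le 0$, and also a second variant with slack $-\epsilon$. Substituting $\omega$ into the bound minimized by SP-BP gives
\begin{equation*}
\Delta(t)+V\mathbb{E}[g^{s}(t)\mid\bbQ(t)]\le C+C'+V\bar g^*-\epsilon\textstyle\sum Q_i^{(c)}(t)
\end{equation*}
(after mixing the two $\omega$ variants). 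Telescoping over $t$ and dividing by $T$ yields $\bar g(\pi^s)\le\bar g^*+(C+C')/V$. It also yields $\limsup\frac1T\sum\mathbb{E}[\sum Q]\le (C+C'+V(\bar g^*-g_{\min}))/\epsilon$, where $g_{\min}$ is bounded because the $\mu$ and the biases are bounded. This gives strong stability \eqref{eq:P0:stability}.

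\textbf{Main obstacle.} The delicate step is the mismatch between the paper's SP-BP and exact drift-plus-penalty minimization. First, the scheduler is LGS, which is approximate MWIS rather than exact. I would handle this by assuming a $\beta$-approximate scheduler, which shrinks the guaranteed region to $\beta\Lambda$, or else by stating the result for exact Max-Weight. Second, there is the indicator $\mathds{1}[Q_i>0]$ together with the $\min\{Q,\grave{\bbr}\}$ truncation, which makes the actual $\mu$ differ from the one assumed in the bound. Here the plan is the standard ``idle-fill'' argument: the difference affects only terms bounded by $\grave r_{\max}(Q_{\max\text{-bias}})$ when queues are small, and this folds into $C'$. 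Third, the bias here is fixed rather than scaled by a tunable $V$, so ``near-optimal'' is to be read as an $O(1/V)$ gap for the scaled bias field, or as a constant gap $C/1$ in the unscaled case. I would state the proposition's conclusion in that form.
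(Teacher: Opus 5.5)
Your route is the paper's route. The paper also identifies SP-BP as the minimizer of a bound on $\Delta(t)+g(t)$ (equivalently, on the drift of the biased backlogs $U=Q+B$), takes strong stability from the biased-MaxWeight stability theorem it cites, and invokes Neely's drift-plus-penalty theorem for near-optimality ``with the tradeoff governed by the scaling of the bias field''. You unpack that theorem and are more explicit about LGS, the $\min\{Q,r\}$ truncation and the empty-queue indicator.

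However, the step where you fold those edge effects into $C'$ does not give the rate you claim. The comparison policy $\omega$ is queue-blind. Whenever it offers rate on $(i,j)$ for a commodity $c$ with $Q_i^{(c)}(t)$ empty or below the link rate, it collects a term of size up to $r_{\max}V\big(B_i^{(c)}-B_j^{(c)}\big)$ in the bound. SP-BP cannot match this, because its indicator is zero or its transmission is truncated. If you truncate $\omega$ as well, its expected penalty is no longer $\bar g^*$ up to a $V$-independent error. Your own description of $C'$ (link rate times the bias magnitude) says exactly this once $B\leftarrow VB$. So $C'=\Theta(V)$ and $(C+C')/V=O(1)$: the argument yields only a constant cost gap, not the $[O(1/V),O(V)]$ tradeoff you advertise. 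A smaller point: mixing the two comparison policies cannot put both $V\bar g^*$ and $-\epsilon\sum Q$ into one inequality. You should use the two inequalities separately, which is what you actually do afterwards. The paper's one-line appeal to Neely's theorem passes over the same point.

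The repair is simpler than the machinery, and it shows what the proposition really asserts. Put $D(t)=\sum_c\sum_{i\neq c}B_i^{(c)}Q_i^{(c)}(t)$. Since $B_c^{(c)}=0$, and $\mu_{cj}^{(c)}=0$ because $Q_c^{(c)}=0$, the dynamics \eqref{E:queue} give the exact identity
\begin{equation*}
g(t)=D(t+1)-D(t)-\sum_{c}\sum_{i\neq c}B_i^{(c)}A_i^{(c)}(t).
\end{equation*}
So $g$ is a potential difference: each delivered packet contributes $-B^{(c)}$ of its source, whatever path it takes. Averaging with $D(0)=0$,
\begin{equation*}
\frac1T\sum_{t<T}\mathbb{E}[g(t)]=\frac{\mathbb{E}[D(T)]}{T}-\sum_c\sum_{i\neq c}\lambda_i^{(c)}B_i^{(c)}.
\end{equation*}
The first term is nonnegative for every policy. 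It vanishes for every strongly stable policy, since strong stability implies mean rate stability under bounded arrival moments.

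Hence every policy feasible for \eqref{eq:P0_policy} attains the same value $-\sum_c\sum_{i\neq c}\lambda_i^{(c)}B_i^{(c)}$, which is the optimum. SP-BP's cost gap is therefore exactly zero, for any bias scale, as soon as it is strongly stable. You can drop the cost half of your Step 2 entirely. The $\epsilon$-slack comparison alone gives strong stability with $O(V)$ backlog, and a $\Theta(V)$ constant $C'$ is harmless there (for exact or $\beta$-approximate MaxWeight, as you note). The same identity also explains the paper's equality $\arg\min\mathrm{UB}[\tilde\Delta]=\arg\min\mathrm{UB}[\Delta+g]$, since $\frac12\sum(Q+B)^2=\mathcal{L}+D+\text{const}$.
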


\begin{proof}
% \vspace{1mm}
% \noindent
% \textbf{SP-BP policy:}
Let $U_i^{(c)}(t)= Q_i^{(c)}(t) + B_i^{(c)}$ denote biased backlog,   $U_{ij}^{(c)}(t) = U_i^{(c)}(t) - U_j^{(c)}(t)$ as biased pressure, and $\bbQ(t)=\big[Q_i^{(c)}(t)\big]_{i,c\in\ccalV}$ as the queueing state at step $t$. 
We define one-step conditional drift of the Lyapunov function as:
$$
\Delta(t)=\mathbb{E}[\ccalL(t+1)-\ccalL(t)\mid \bbQ(t)],\; \ccalL(t)=\frac{1}{2}\sum_{i,c\in\ccalV}Q_i^{(c)}(t)^2.
$$
The biased drift $\tilde\Delta(t)$ can be defined similarly based on biased backlogs $\{U_i^{(c)}(t)\}$.
Based on Lyapunov theory for biased BP~\cite[Theorem 1]{zhao2024tmlcn} and its drift-plus-penalty (DPP) interpretation~\cite{zhao2026ton}, the policy of SP-BP ($\pi^s$) is formulated as solving a MaxWeight problem, which minimizes the upper bound of DPP objective at each time step $t$:
\begin{subequations}
\begin{align}
    \bbM^{\pi^s}(t)&=\argmax_{\bbM(t)\in\bbPi} \sum_{(i,j)\in\ccalE}\sum_{c\in\ccalV} \mu_{ij}^{(c)}(t) U_{ij}^{(c)}(t) \label{eq:maxweight}\\
    &= \argmin_{\bbM(t)\in\bbPi}\; \text{UB}\left[\tilde\Delta(t)\right]\;\\
    &= \argmin_{\bbM(t)\in\bbPi}\; \text{UB}\left[\Delta(t) + g(t)\right]\;,\label{eq:dpp}
\end{align}
\end{subequations}
where $\text{UB}(\cdot)$ stands for the upper bound.
It has been proven in~\cite[Theorem 1]{zhao2024tmlcn} that~\eqref{eq:maxweight} guarantees strong stability for all $\bblambda\in\text{int}(\Lambda)$, defined as~\cite{neely2005dynamic,georgiadis2006resource,neely2010stochastic,jiao2015virtual}
\begin{equation}\label{eq:stability}
\limsup_{T\rightarrow \infty}\frac{1}{T}\sum_{t=0}^{T-1}\sum_{i,c\in\ccalV}\mathbb{E}\left[Q_i^{(c)}(t)\right]<\infty\;.    
\end{equation}
Since minimizing the drift bound $\text{UB}[\Delta(t)]$ at each step $t$ guarantees strong stability of queues for demands $\bblambda\in\text{int}(\Lambda)$~\cite{neely2005dynamic,georgiadis2006resource}, by the standard DPP framework~\cite[Theorem 4.8]{neely2010stochastic}, the per-slot drift bound minimization in~\eqref{eq:dpp} yields a policy that achieves near-optimal time-average cost $\bar g(\pi)$ and queue stability in~\eqref{eq:P0_policy}, with the tradeoff governed by the scaling of the bias field $\{B_{i}^{(c)}\}$.
Therefore, the DPP minimization of SP-BP in~\eqref{eq:dpp} can be interpreted as an online Lagrangian for Problem~\eqref{eq:P0_policy}. With a properly scaled bias field~\cite{zhao2024tmlcn}, SP-BP thus serves as a near-optimal policy for Problem~\eqref{eq:P0_policy} with any demands $\bblambda\in\text{int}(\Lambda)$.
% This provides an online primal-dual interpretation of SP-BP in~\eqref{eq:dpp} for Problem~\eqref{eq:P0_policy}. SP-BP with a properly scaled bias field~\cite{zhao2024tmlcn} thus yields a near-optimal policy for Problem~\eqref{eq:P0_policy}.
\end{proof}

This optimality gap arise from the bias field scaling tradeoff in \eqref{eq:dpp} and the tightness of the drift upper bound when a heuristic MaxWeight scheduler is applied.
In practice, optimality is further limited by the finite virtual routing horizon, node mobility, and virtual-physical mismatches in both demand profiles and instantaneous link rates.

% The nearness of optimality depends on several factors:
% theoretically, the tradeoff governed by the scaling of bias fields in~\eqref{eq:dpp}, the tightness of the upper bound with heuristic MaxWeight scheduler; 
% practical limitations like finite horizon of virtual routing, mismatched virtual-physical demand profile, and knowledge and realization of instantaneous link rates, and mobility. 

\subsection{Stationary Policy in Restricted Regime}

\begin{definition}
For any stabilizing policy $\pi$, define the induced time-average (expected) commodity flow rates for an arbitrary admissible demand profile $\lambda\in \text{int}(\Lambda)$ as:
\begin{equation}\label{eq:avg_flow_def}
f_{ij}^{(c),\pi}\triangleq \limsup_{T\to\infty}\frac{1}{T}\sum_{t=0}^{T-1}\mathbb{E}\big[\mu_{ij}^{(c),\pi}(t) \big],\quad \forall (i,j)\in\ccalE,c\in\ccalV .
\end{equation}    
\end{definition}
This induced time-average commodity flows represents the long-term behavior of a stabilizing policy under $\bblambda$. 

\begin{lemma}\label{L:fcons}
The time-average commodity flows induced by a stabilizing policy satisfy flow conservation.
\end{lemma}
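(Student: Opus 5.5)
The plan is to telescope the queue recursion \eqref{E:queue} over a horizon of $T$ slots, take expectations, divide by $T$, and let $T\to\infty$. Stability of the queues should force the normalized terminal backlog to vanish. What remains is that, for each non-destination node $i\neq c$, the induced flows satisfy
\[
\sum_{j\in\ccalN_{\ccalG^n}(i)} f_{ij}^{(c),\pi} - \sum_{j\in\ccalN_{\ccalG^n}(i)} f_{ji}^{(c),\pi} = \lambda_i^{(c)} ,
\]
while at the destination $c$ the net inflow absorbs the total demand. Flows leaving $c$ are zero because $Q_c^{(c)}\equiv 0$ and \eqref{E:pi:conserve} holds.

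\textbf{Telescoping.} Summing \eqref{E:queue} from $t=0$ to $T-1$ and using $Q_i^{(c)}(0)=0$ gives
\[
\frac{\mathbb{E}[Q_i^{(c)}(T)]}{T} = \frac{1}{T}\sum_{t=0}^{T-1}\mathbb{E}\Big[A_i^{(c)}(t)+\sum_j\mu_{ji}^{(c)}(t)-\sum_j\mu_{ij}^{(c)}(t)\Big].
\]
The arrival average converges to $\lambda_i^{(c)}$.

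\textbf{Vanishing backlog.} Next I show that $\mathbb{E}[Q_i^{(c)}(T)]/T\to 0$. Strong stability \eqref{eq:stability} by itself only bounds the Cesàro average of $\mathbb{E}[Q]$. However, all increments are bounded: arrivals have finite rate, and link rates $\grave{\bbr}$ are bounded, so I will assume bounded second moments of the arrivals. Under that assumption, the standard result in \cite{neely2010stochastic} gives rate stability, i.e. $\mathbb{E}[Q(T)]/T\to 0$, from strong stability. If needed, one can argue directly: if $\mathbb{E}[Q(T_k)]\ge \delta T_k$ along a subsequence, then bounded decrements imply $\mathbb{E}[Q(t)]\ge \delta T_k/2$ on a window of length $\Theta(T_k)$. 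That would make the Cesàro average unbounded, which is a contradiction.

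\textbf{Main obstacle.} The hard part is the limsup in \eqref{eq:avg_flow_def}. Because $f$ is defined as a limsup, the limit of the difference does not directly equal the difference of the limsups. I would handle this in one of two ways. The first is to note that the flows are bounded by $\max_e r_e$. By compactness, I can pick a common subsequence $T_k$ along which all averaged $\mu_{ij}^{(c)}$ converge, and the conservation identity then holds for those subsequential limits. The second is to assume that the time averages converge, which is true for the stationary or ergodic policies considered here, so that the limsup coincides with the limit. I would state the lemma for these subsequential or limiting flows. I expect this technical point, together with the rate-stability step, to be where care is required. The remainder is bookkeeping.
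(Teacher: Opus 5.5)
Your proposal is correct and takes the same route as the paper, which simply applies strong stability and \eqref{E:pi:conserve} to the telescoped sum of \eqref{E:queue} to obtain $\bbA \bbf^{(c),\pi}=\bbd^{(c)}$. The paper's one-line argument silently assumes $\mathbb{E}[Q_i^{(c)}(T)]/T\to 0$ and treats the $\limsup$ in \eqref{eq:avg_flow_def} as a limit, so your bounded-decrement argument and your common-subsequence caveat supply exactly the details it omits; that argument needs only bounded link rates via \eqref{E:pi:capacity}, not second moments of the arrivals.
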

\begin{proof}
Applying strong stability of queues in \eqref{eq:stability} and instantaneous flow conversation in \eqref{E:pi:conserve} to the telescopic sum of queue dynamics in~\eqref{E:queue}  leads to flow conservation 
$$
\sum_{j\in\ccalN_{\ccalG^n}(i)} f_{ij}^{(c),\pi} - \!\!\!\!\sum_{j\in\ccalN_{\ccalG^n}(i)} f_{ji}^{(c),\pi} = \lambda_{i}^{(c)}\;,\quad \forall i,c\in\ccalV\;.
$$
With node-edge incidence matrix $\bbA$ of graph $\ccalG^n$, we have:
$$
\bbA \bbf^{(c),\pi}  = \bbd^{(c)} ,\quad\bbf^{(c),\pi} = \left[f_{ij}^{(c),\pi}\right]_{(i,j)\in\ccalE},\quad \forall c\in\ccalV,
$$
where $\bbd^{(c)}_i=\lambda_i^{(c)}$ and $\bbd_c^{(c)}=-\sum_{i\neq c}\lambda_i^{(c)}$.    
\end{proof}

\begin{proposition}
Under a policy regime defined by stationary probabilistic forwarding and per-neighbor FIFO queues, Problem~\eqref{eq:P0_policy} is equivalent to a constrained multi-commodity min-cost flow (MCMCF) problem for demand $\bblambda \in\text{int}(\Lambda')$:
\begin{subequations}\label{eq:P1_mcmcf}
\begin{align}
& \min_{\pi}\quad \sum_{(i,j)\in\ccalE}\sum_{c\in\ccalV} f_{ij}^{(c),\pi}\cdot B_{ji}^{(c)} \label{eq:P1:obj} \\
\text{s.t.}\quad 
% & \{Q_{ij}(t)\}_{(i,j)\in\ccalE}\ \text{are strongly stable},\\
& \bbA \bbf^{(c),\pi}  = \bbd^{(c)},\quad \forall c\in\ccalV, \label{eq:P1:stability} \\
& \bbM^{\pi}(t) \in\bbPi ,\ \forall t, \label{eq:P1:schedule} \\
& \bbM^{\pi}(t) = \pi^a \left( \grave{\bbr}_{*}(t) \mid \lambda, \ccalG^n,\ccalG^c,\bbF^{\pi}\right), \label{eq:P1:forward}
\end{align}
\end{subequations}
where matrix $\bbF^{\pi} = \left[f_{ij}^{(c),\pi}\right]_{(i,j)\in\ccalE, c\in\ccalV}$ collects all commodity flows.
\eqref{eq:P1:forward} defines the restricted policy regime: probabilistic forwarding, per-neighbor FIFO queues, and commodity-blind MaxWeight scheduling in \eqref{eq:mwis}.
\end{proposition}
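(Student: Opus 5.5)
The plan is to show that, once the policy class is restricted to the regime in \eqref{eq:P1:forward}, each element of Problem~\eqref{eq:P0_policy} maps one-to-one onto the corresponding element of \eqref{eq:P1_mcmcf}. There are three pieces to check: the objective, the stability constraint, and the policy parameterization.

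\textbf{Objective.} By linearity of expectation and the definition of $g(t)$ in \eqref{eq:slot_cost}, the time-average cost is
\[
\bar g(\pi)=\limsup_{T\to\infty}\frac{1}{T}\sum_{t=0}^{T-1}\sum_{(i,j)\in\ccalE}\sum_{c\in\ccalV}\mathbb{E}\big[\mu^{(c),\pi}_{ij}(t)\big]B^{(c)}_{ji}.
\]
Under a stationary probabilistic forwarding policy with commodity-blind MaxWeight over per-neighbor FIFO queues, the joint queue process is a time-homogeneous Markov chain. It is driven by i.i.d.\ arrivals and link rates. If this chain is positive recurrent (strongly stable), the per-slot expectations converge in the Ces\`aro sense, so the $\limsup$ is a genuine limit. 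The limit can therefore be interchanged with the finite sum over $(i,j)$ and $c$. This gives $\bar g(\pi)=\sum_{(i,j)\in\ccalE}\sum_{c\in\ccalV} f^{(c),\pi}_{ij}B^{(c)}_{ji}$ with $f$ as in \eqref{eq:avg_flow_def}, which is exactly \eqref{eq:P1:obj}.

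\textbf{Stability.} For the forward direction, any strongly stable $\pi$ in the restricted regime induces flows satisfying \eqref{eq:P1:stability} by Lemma~\ref{L:fcons}. For the converse, I would define $\Lambda'$ as the set of demands for which some restricted policy stabilizes the per-neighbor queues $\{q_{ij}(t)\}$. For $\bblambda\in\text{int}(\Lambda')$, the per-commodity backlogs $Q^{(c)}_i(t)$ are bounded by sums of the $q_{ij}(t)$, so strong stability of one implies the other. Within this interior, feasibility of \eqref{eq:P1:stability} together with \eqref{eq:P1:schedule} therefore stands in for \eqref{eq:P0:stability}. Constraint \eqref{eq:P1:schedule} is inherited verbatim from \eqref{eq:P0:scheduling}.

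\textbf{Parameterization.} In this regime the instantaneous decisions $\bbM^\pi(t)$ depend on the queues only through the stationary routing probabilities and the MaxWeight rule \eqref{eq:mwis}. The forwarding probabilities can be recovered from the flows by normalizing outgoing flow, $p^{(c)}_{ij}=f^{(c)}_{ij}/\sum_{l}f^{(c)}_{il}$. Conversely, a fixed $\bbF^\pi$ determines the forwarding policy. So optimizing over $\pi$ in the restricted class is the same as optimizing over achievable $\bbF^\pi$, which is the content of \eqref{eq:P1:forward}.

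\textbf{Main obstacle.} The hardest step is justifying the limit interchange and the identification of $\Lambda'$ in the converse direction. It requires positive recurrence of the FIFO/MaxWeight chain under the heuristic LGS scheduler, and I would not try to prove this from first principles. Instead, I would define $\Lambda'$ operationally as the set of demands on which restricted policies are strongly stable. I would then invoke ergodicity of positive-recurrent countable-state Markov chains to obtain the flow limits. The equivalence would be stated as holding on that set, which is why the proposition is restricted to $\text{int}(\Lambda')\subseteq\text{int}(\Lambda)$.
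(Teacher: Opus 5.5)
Your three-part structure matches the paper's proof, which time-averages \eqref{eq:slot_cost} to identify $\bar g(\pi)$ with the flow cost, cites Lemma~\ref{L:fcons} for the stability constraint, and notes that \eqref{eq:P1:schedule} and \eqref{eq:P0:scheduling} coincide. Your objective step is more careful than the paper's. Because $f^{(c),\pi}_{ij}$ is defined by a $\limsup$ and the $B^{(c)}_{ji}$ take both signs, linearity of expectation alone does not let the $\limsup$ pass through the sum. Your positive-recurrence argument (Ces\`aro convergence of the bounded quantities $\mathbb{E}[\mu^{(c),\pi}_{ij}(t)]$) is what actually justifies that interchange. Your parameterization paragraph goes beyond the paper, which treats \eqref{eq:P1:forward} as the definition of the regime. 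It is harmless but not needed.

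The step that fails is your converse for the stability constraint. You define $\Lambda'$ as the demands for which \emph{some} restricted policy is strongly stable, and you note that per-commodity backlogs and per-neighbor queues at a node count the same packets. Neither fact tells you whether a \emph{given} restricted $\pi$ satisfying $\bbA\bbf^{(c),\pi}=\bbd^{(c)}$ is strongly stable. Time-average flow conservation only encodes rate stability. Forwarding probabilities that place the induced per-link load on the boundary of what the scheduler can serve give queues that grow sublinearly, so \eqref{eq:P1:stability} still holds, yet their time-average expected backlog is unbounded. Such choices can exist even when $\bblambda\in\text{int}(\Lambda')$.

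The paper does not prove this converse either. Its argument works only because the Definition introduces $f^{(c),\pi}$ for stabilizing policies alone. Strong stability is therefore built into the feasible set of \eqref{eq:P1_mcmcf}, and Lemma~\ref{L:fcons} merely shows that \eqref{eq:P1:stability} is then automatically met. Replace your ``stands in for'' step with that convention, or keep strong stability as an explicit constraint of \eqref{eq:P1_mcmcf}.
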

\begin{proof}
Apply time average to \eqref{eq:slot_cost}, and by linearity of expectation, the objectives in \eqref{eq:P0:obj} and~\eqref{eq:P1:obj} are identical
$$
\bar g(\pi)=\sum_{(i,j)\in\ccalE}\sum_{c\in\ccalV} f_{ij}^{(c),\pi}\cdot B_{ji}^{(c)}\;.
$$ 
Based on Lemma~\ref{L:fcons}, constraint~\eqref{eq:P1:stability} is equivalent to~\eqref{eq:P0:stability}. 
Constraints~\eqref{eq:P1:schedule} and~\eqref{eq:P0:scheduling} are identical.
\end{proof}

\begin{remark}
Ant-BP approximates the optimal policy for Problem~\eqref{eq:P1_mcmcf} in the restricted regime.
\end{remark}
\noindent
Ant-BP constructs a pheromone routing policy in~\eqref{eq:pheromone} through virtual SP-BP in~\eqref{eq:optimalcomm}--\eqref{eq:trans}, and therefore heuristically approximate the constrained MCMCF problem in~\eqref{eq:P1_mcmcf} under restricted per-neighbor FIFO queue architecture. 
However,  since the MaxWeight objective in~\eqref{eq:maxweight} can no longer be optimized slot-by-slot under this restricted regime, the stability region of Ant-BP is generally reduced.

% Therefore, our Ant-BP can be interpreted as a heuristic routing solution targeting~\eqref{eq:P0_policy} under a per-neighbor FIFO queue system.

Through experiments in Section~\ref{sec:results}, we demonstrate that the pheromone policy established from virtual streaming traffic generalizes well to scenarios with short-lived traffic and mismatched traffic loads.

\section{Numerical Experiments}\label{sec:results}
We evaluate the effectiveness, robustness, and trade-offs of Ant-BP in both route optimization and last-packet problem mitigation, by comparing it against state-of-the-art SP-BP~\cite{zhao2024tmlcn} and ACO routing under challenging scenarios, such as mixed streaming and bursty traffic, uncertain arrival rates, transient link failures, and mobility.

% To assess the performance of our proposed Ant-BP under link failure and mobility, we conducted evaluations in simulated wireless multi-hop networks. These evaluations included comparisons with several benchmarks, including the 

\subsection{Test Setup and Baselines}\label{sec:results:config}
We simulate wireless ad-hoc networks of $|\ccalV|=100$ nodes generated from a 2D point process with a uniform density of $8/\pi$ in a square area.
Assuming beamforming antennas and uniform transmit power, we adopt a unit-disk graph model, in which two nodes are connected if their distance $\leq 1$.
This unitless graph abstraction is frequency-agnostic and ensures broad applicability across various physical layer specifications.
By mitigating radio interference, this setup yields a moderate network density with an average conflict graph degree of $13.86$.
% While this configuration most closely models conventional wireless ad hoc networks. 
% we additionally evaluate alternative topologies, including Barabási–Albert graphs with $m \in \{1,2\}$ and a hierarchical server–user structure to represent computation offloading scenarios~\cite{erfaniantaghvayi2025selr}.
% \red{SS: [Add results for Barabási–Albert and hierarchical topologies here, or remove this claim if results will not be included.]}

We generate $100$ test instances derived from $10$ random topologies, each paired with $10$ random realizations of source-destination pairs and stochastic link rates. 
Each instance features a uniformly random number of flows between $\lfloor 0.15 |\ccalV|\rfloor$ and $ \lceil 0.30|\ccalV|\rceil$.
To capture fading channels with lognormal shadowing, OFDM-like wide-band waveforms, and power control, long-term link rates are drawn from a uniformly distribution $\bbr_{e} \sim \mathbb{U}(10,42)$, while real-time link rates fluctuate as $\grave{\bbr}_{e}(t) \sim  \mathbb{N}(\bbr_{e}, 3^2)$, truncated to $\bbr_e\pm 9$.
Simulations run for $T=1000$ time steps.\footnote{The implementation details and source code for this study are available at \url{https://github.com/Negar-Erfanian/AntBP.git}.
}

\begin{figure*}[t!]
    \hspace{-3mm}
    \subfloat[]{
    \includegraphics[width=0.32\linewidth]{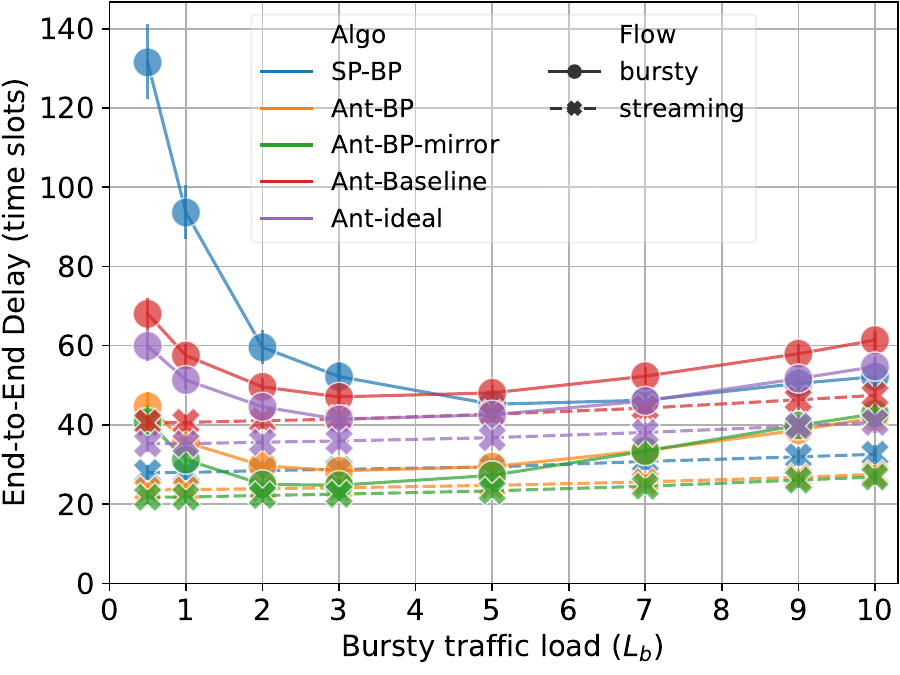}
    \label{fig:mixed:delay}
    }
    \hspace{-3mm}
    \subfloat[]{
    \includegraphics[width=0.32\linewidth]{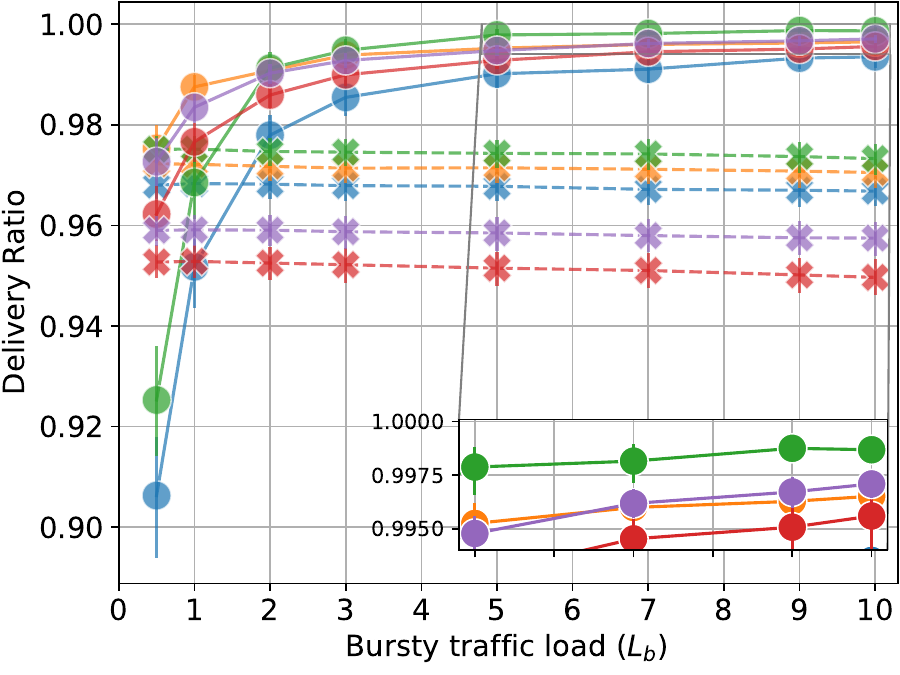}
    \label{fig:mixed:delivery}
    }
    \hspace{-3mm}
    \subfloat[]{
    \includegraphics[width=0.32\linewidth]{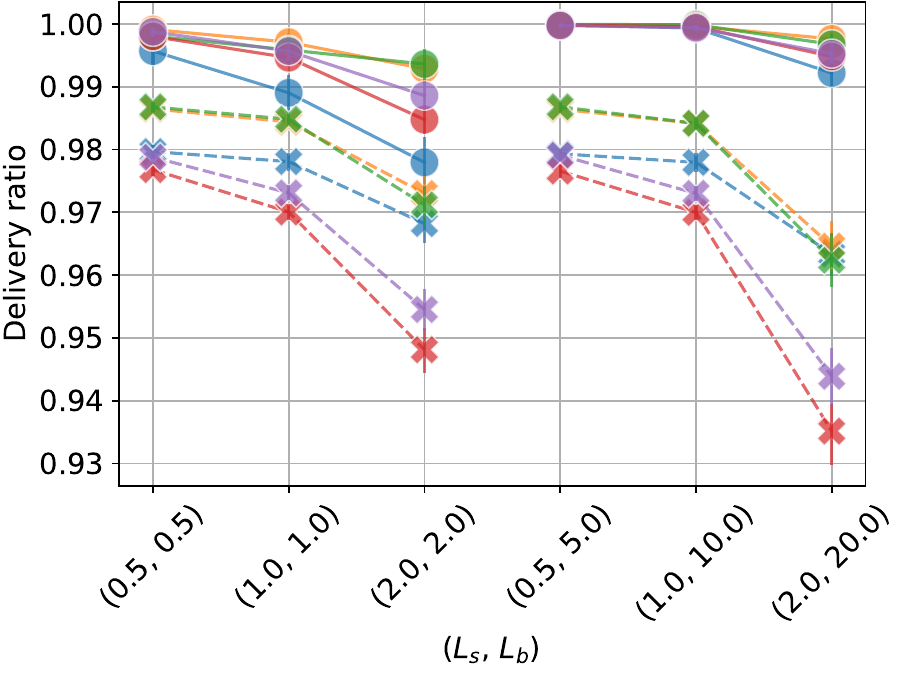}\label{fig:robust}\label{fig:robust:delivery}
    }  
    \vspace{-0.1in}
    \caption{The average (a) end-to-end latency and (b) delivery ratio of tested routing policies as a function of bursty flow load $ L_b $ in 10 random instances of wireless networks of 100 nodes under mixed traffic setting with a constant streaming load  $L_s=2.0$. 
    (c) The average delivery ratio of tested routing schemes in wireless networks of 100 nodes under mixed traffic with different loads shown in the x-axis. 
    For Ant-based schemes, routing policies are established based on virtual traffic flows of $(1.0, 1.0)$ and $(1.0, 10.0)$ for the left and right sides, respectively.}
    \label{fig:mixed}
    \vspace{-0.1in}
\end{figure*}

We define two traffic types: streaming and bursty, with exogenous packet arrivals following a Poisson process.
Streaming flows have a constant rate $L_{s}\lambda_{s} $ throughout the simulation, where $\lambda_{s}\in \mathbb{U}(0.2, 1.0)$ and $L_s$ is the streaming load.
Bursty flows operate at a rate $L_b\lambda_b$ (with $\lambda_{b}\in \mathbb{U}(0.2, 1.0)$ and load $L_b$) strictly within a short active window $[t_s, t_s + 30]$, and $\lambda_b = 0$ otherwise. 
The burst start time $t_s$ is drawn uniformly from $[0, T-100]$ for physical routing, and $t_s=0$ for virtual routing.
In mixed traffic settings, a flow is configured as bursty with probability $P_{b}=0.5$.

We evaluate two variations of Ant-BP against three baselines.
For all policies utilizing shortest path biases, the biases $\{B_{i}^{(c)}|i,c\in\ccalV\}$ are established via APSP on $\ccalG^n$ with edge weights $ \delta_e=\bar{r}r_{\max}/r_e, \forall e\in\ccalE$ \cite[Sec. IV-B]{zhao2024tmlcn}. 
All virtual routing phases run for $K=1000$ time steps.
\begin{enumerate}[leftmargin=1em,labelsep=0.2em,itemsep=0pt,topsep=1pt]
    \item \textit{Ant-BP}: Virtual flows are configured strictly as streaming with load $L_s$, ignoring actual physical traffic profiles.
    \item \textit{Ant-BP-mirror}: Virtual flows perfectly replicate the rate and traffic type of their corresponding physical flows.
    \item \textit{SP-BP}: State-of-the-art SP-BP~\cite{zhao2023enhanced,zhao2024tmlcn} running directly on physical traffic with per-commodity queues.
    \item \textit{Ant-Baseline}: Routing policy established via ACO routing (Appendix~\ref{app:aco}) in virtual phase and frozen during physical routing. Pheromones are initialized as $\rho^{(c)}_{ij}(0) = 1.3$ and updated per time step by a constant $\theta_{ij,k}^{(c)}(\tau) = 0.01$ (with evaporation $\varepsilon=0.002$). 
    % for $m$ ants traversing link $(i,j)$. 
    To align with Ant-BP, virtual routing probabilities follow \eqref{eq:sgs}:
    \begin{equation}
    \label{eq:sgs:test}
    p_{ij}^{(c)}(\tau) = \frac{\rho_{ij}^{(c)}(\tau) + B_{ij}}{\sum_{l\in\ccalN_{\ccalG^n}(i)} \left[\rho^{(c)}_{il}(\tau) + B_{il}\right]}\;,
    \end{equation}
    where $B_{ij}=B_{i}\!-\!B_j$ (Sec.~\ref{sec:BP}) acts as a heuristic link cost.

    \item \textit{Ant-Ideal}: Pheromones are initialized like Ant-Baseline but actively maintained during physical routing by proactive ants (sent per 100 data packets, with a 10\% uniform exploration probability). The path cost $\phi\big(\ccalP^{(c)}_k\big)$ in \eqref{eq:ph_update:delta} is ant $k$'s end-to-end latency, and pheromones update (see \eqref{eq:ph_update}) instantaneously upon delivery, without backward ant propagation.     
\end{enumerate}
Crucially, despite sharing the same $K$ steps of virtual routing, Ant-BP incurs significantly lower physical overhead than Ant-Baseline and Ant-Ideal, as it only exchanges packet counts rather than sending out actual ant packets.

\subsection{Performance under Mixed Traffic and Injection Rate Uncertainties}\label{sec:results:mixed}
To evaluate routing quality and bursty-flow handling under mixed traffic, we test all schemes with a moderate fixed streaming load $L_s = 2.0$ and varying bursty loads $L_b \in \{0.5, 1, \dots, 10\}$.
% Because bursty packets constitute a small fraction of the total traffic ($L_b\cdot30/T$), 
Because bursts occupy only a small fraction of the total time horizon ($30/T$), increasing the bursty load $L_b$ contributes only a limited amount of additional traffic. As a result, the synthetic latency\footnote{By default, latency of undelivered packets is treated as $T$, except for Section~\ref{sec:results:mobility}, where it is residency time ($T-$injection time)} and delivery ratios of streaming flows remain stable as $L_b$ increases, as illustrated by dashed lines in Figs.~\ref{fig:mixed:delay} and~\ref{fig:mixed:delivery}.
% Given a constant $L_s$, as $L_b$ increases, the total number of packets injected into the networks increases slightly 
% therefore, the average latency (delivery ratio) of streaming flows for all routing schemes only increases (decreases)  slightly, 
% The average end-to-end latency and delivery ratio as a function of $L_b$ are shown in Figs.~\ref{fig:mixed:delay} and~\ref{fig:mixed:delivery}, respectively.
For streaming flows, Ant-BP-mirror and Ant-BP achieve the best and second-best latency and delivery ratios, followed by SP-BP, Ant-Ideal, and Ant-Baseline. 
Their respective average delivery ratios are $0.974, 0.971, 0.968, 0.958$, and $0.952$. 
% Their average latencies are $23.62, 24.97, 29.72, 37.16$, and $43.04$, respectively.

The latency of bursty flows is consistently higher and more sensitive to $L_b$ than streaming flows (solid lines in Fig.~\ref{fig:mixed:delay}), as bursty traffic lacks the persistent congestion gradients to drive queue-based Max-Weight scheduling.
However, because bursty packets arrive entirely within the a window of $30$ time slots starting at $t_s$, most of them have sufficient time to reach their destinations within the simulation horizon $T=1000$, resulting in higher delivery ratios than streaming traffic (Fig.~\ref{fig:mixed:delivery}).
Ant-BP and Ant-BP-mirror consistently outperform the ACO baselines across both metrics, demonstrating the superiority of virtual SP-BP pathfinding over traditional ACO.
Furthermore, their performance gains over SP-BP highlight the effectiveness of per-neighbor FIFO queues and link capacity sharing in mitigating the last-packet problem.

% As illustrated by the solid lines in Fig.~\ref{fig:mixed:delay}, the latency of bursty flows is always higher than that of streaming flows, and more sensitive to $L_b$,  
% since bursty traffic lacks consistent congestion gradients to drive the scheduling of links on its routes in the presence of streaming flows under queue length-based Max-Weight scheduling.
% However, since all packets of bursty flows arrive in the first $30$ time slots of the simulation, most of them reach their destinations after spending sufficient time in the network $T=1000$, resulting in a higher delivery ratio than streaming traffic, as illustrated in Fig.~\ref{fig:mixed:delivery}.
% The fact that Ant-BP and Ant-BP-mirror almost always outperform Ant-Baseline and Ant-Ideal in latency and delivery ratio for streaming and bursty flows shows that the pathfinding based on virtual SP-BP is superior to ACO. 
% The gain of Ant-BP and Ant-BP-mirror over SP-BP also demonstrates the effectiveness of FIFO-based link capacity sharing among commodities.

Lower burst loads ($L_b \leq 3$) present the most challenging starvation regime, where performance degrades for all schemes as $L_b$ decreases.
SP-BP suffers the most severe degradation (e.g., a latency of $131.5$ and delivery ratio of $0.906$ at $L_b=0.5$) because of its exclusive commodity selection based on instantaneous congestion gradients.
Conversely, Ant-BP avoids this starvation by allowing different commodities to share link capacity and per-neighbor queues in a FIFO fashion, achieving the best delivery ratio ($0.975$) and second-best latency ($44.7$) at $L_b=0.5$.
Ant-BP-mirror ranks best overall for $L_b\geq 2$ by perfectly leveraging exact bursty flow knowledge.
However, at $L_b=0.5$, its delivery ratio drops significantly to $0.925$ because its exact virtual traffic replica also suffers the last-packet problem.
By configuring all virtual flows as streaming, Ant-BP eliminates this issue, managing lightweight bursty traffic highly effectively at the cost of slight performance degradation under heavier loads compared to Ant-BP-mirror.

% For lower burst loads $L_b \leq 3$, the latency and delivery ratio of bursty traffic for all routing schemes get worse as $L_b$ decreases, as illustrated in Figs.~\ref{fig:mixed:delay} and~\ref{fig:mixed:delivery}, presenting the most challenging cases where bursty traffic is starved from link scheduling. 
% In particular, among all the schemes, SP-BP exhibits the worst latency (e.g., $131.5$ for $L_b=0.5$) and the lowest delivery ratio (e.g., $0.906$ for $L_b=0.5$) under bursty traffic conditions, due to its routing decisions also depending on congestion gradients.
% Ant-BP achieves the best delivery ratio (e.g., $0.975$ for $L_b=0.5$) and ranks second in latency (e.g., $44.7$ for $L_b=0.5$) under bursty flows with $L_b \leq 3$. 
% It does so by leveraging the superior pathfinding capability of virtual SP-BP and FIFO-based link capacity sharing.
% Ant-BP-mirror ranks the best in delivery ratio and latency for $L_b\geq 2$ by further leveraging the exact knowledge of bursty flows. 
% However, for $L_b=0.5$, its delivery ratio drops significantly to $0.925$ since its virtual SP-BP also suffers from the last-packet problem.
% In contrast, Ant-BP can avoid this issue by configuring all the virtual flows as streaming, 
% thus, it manages lightweight bursty traffic more effectively, at the cost of a slight degradation in latency and delivery ratio for streaming and heavier bursty traffic compared to Ant-BP-mirror.

To evaluate robustness against imperfect physical flow rate estimation, we test the ant-based schemes under physical flow rates that are half or double those used for route establishment ($L_s=1.0, L_b=1.0$ and $L_s=1.0, L_b=10.0$).
We compare these against SP-BP operating with exact instantaneous queueing state information.
Despite these mismatches, the delivery ratio trends and scheme rankings (Fig.~\ref{fig:robust}) remain consistent with the exact-knowledge scenario (Fig.~\ref{fig:mixed:delivery}), with Ant-BP and Ant-BP-mirror maintaining their lead.
In the extreme heavy-traffic case ($L_s=2.0, L_b=20.0$), SP-BP achieves slightly lower latency for both streaming ($41.2$ vs. $47.6$ and $44.1$) and bursty flows ($79.8$ vs. $92.3$ and $82.1$) due to its slot-by-slot management.
Nonetheless, its delivery ratios remain similar for streaming traffic and slightly worse for bursty flows compared to the Ant-BP variants.
These results confirm that Ant-BP and Ant-BP-mirror routing policies are highly robust to mismatched virtual flow configurations.

% Virtual routing may face the challenge of imperfect knowledge of physical traffic flow rates. 
% To evaluate the robustness of routing policies learned from mismatched flow rates, we test four ant-based schemes under physical flow rates that are half or double of those used for route establishment ($L_s=1.0, L_b=1.0$ and $L_s=1.0, L_b=10.0$). 
% For comparison, we also test SP-BP based on the exact queueing state information.
% The overall trends of delivery ratio by traffic load and the rankings of tested schemes in Fig.~\ref{fig:robust} are consistent with the results in Fig.~\ref{fig:mixed:delivery}. 
% Even with mismatched virtual flow rates, Ant-BP and Ant-BP-mirror generally maintain their leading ranks for both streaming and bursty traffic. 
% In the special case of $L_s=2.0, L_b=20.0$, SP-BP achieves slightly lower latency than Ant-BP and Ant-BP-mirror for both streaming flows ($41.2$ v.s.  $47.6$ and $44.1$) and bursty flows ($79.8$ v.s. $92.3$ and $82.1$) due to its advantage of managing heavier traffic; 
% nonetheless, its delivery ratios are similar to those of Ant-BP and Ant-BP-mirror in streaming traffic ($0.963$ v.s. $0.960$ and $0.962$) and slightly worse in bursty flows ($0.992$ v.s $0.995$ and $0.997$).
% It shows that the routing policies of Ant-BP and Ant-BP-mirror are robust to mismatched virtual flow rate configurations.

\FloatBarrier
\subsection{Goodput under Streaming Stress Tests}\label{sec:results:throughput}
We evaluate network goodput (end-to-end throughput defined as total delivered packets per time slot) of four schemes under a pure streaming traffic setting (Ant-BP-mirror and Ant-BP are equivalent in this setting), across loads $L_s \in \{0.5, 1,2, \dots, 12\}$.
As shown in Fig.~\ref{fig:throughput}, Ant-BP achieves similar or slightly higher goodput than SP-BP at lower loads ($L_s \leq 3$), but averages only $84.4\%$ of the goodput of SP-BP for $L_s > 3$.
This divergence occurs because Ant-BP's link capacity sharing yields limited benefits for uniform streaming traffic, whereas the instantaneous queue-based resource allocation of SP-BP excels at managing heavy congestion.
Combined with earlier results, this confirms that Ant-BP significantly improves bursty traffic handling without compromising goodput under low-to-medium streaming intensities.
Additionally, Ant-Ideal surpasses Ant-BP in goodput at extreme loads ($L_s \geq 10$) by leveraging cost-based pheromone deposits and continuous proactive route maintenance.
This suggests that incorporating dynamic route maintenance into Ant-BP could further enhance its heavy-traffic scalability.

\begin{figure}[!t]
    \centering
    \includegraphics[width=0.9\linewidth]{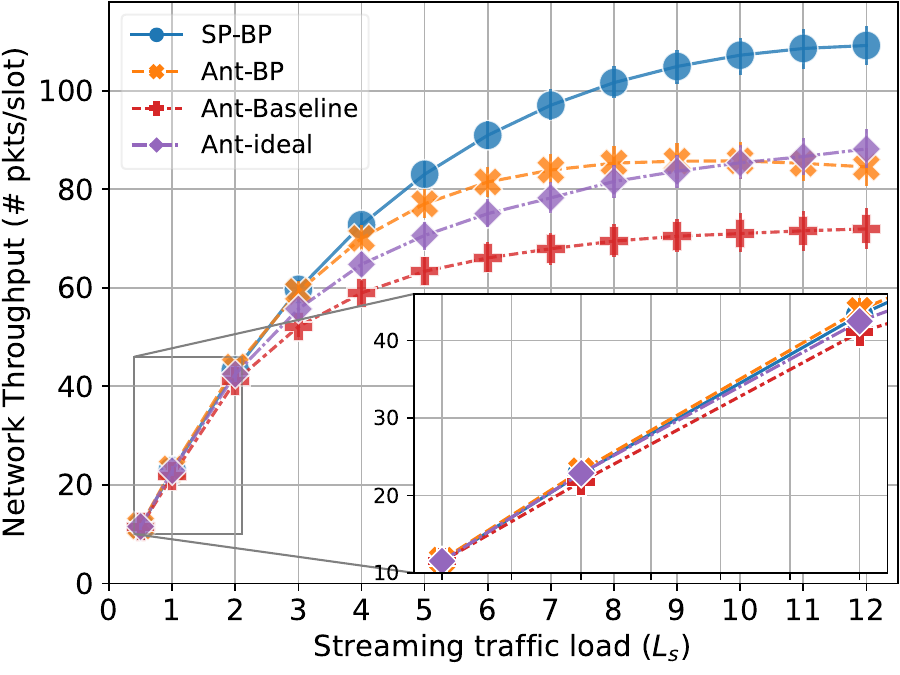}
    \caption{Average network goodput (network-wide number of packets delivered per time slot) of routing schemes on wireless networks with 100 nodes under all streaming traffic as a function of traffic load $L_s$.
    }
    \label{fig:throughput}
\end{figure}

% To better understand the pros and cons of Ant-BP, we compare the throughput performance of the routing schemes under all streaming flows across different traffic loads $L_s \in \{0.5, 1, 2, \dots, 12\}$, by counting the number of delivered packets per physical time-slot across the network. 
% In Fig.~\ref{fig:throughput}, Ant-BP achieves similar or slightly better throughput compared to SP-BP for lower traffic loads ($L_s\leq3$), but only $84.4\%$ of throughput of SP-BP for $L_s>3$ on average. 
% This is because the link capacity sharing of Ant-BP brings limited benefits for streaming traffic, whereas the resource allocation decisions of SP-BP based on instantaneous queueing state information can better manage heavier traffic.
% This result, along with observations from Fig.~\ref{fig:mixed}, demonstrates that Ant-BP can improve the performance of bursty traffic over SP-BP without compromising throughput under low-to-medium streaming traffic intensities ($L_s\leq 3$). 
% Moreover, the throughput of Ant-Ideal surpasses that of Ant-BP for higher traffic loads $L_s\geq 10$ due to its cost-based pheromone deposit mechanism and continuous route maintenance using proactive ants. 
% It suggests that the throughput of Ant-BP might be further improved by incorporating such mechanisms of Ant-Ideal.

\FloatBarrier
\begin{figure*}[!t]
    % ---------- Row 1 ----------
    \hspace{-3mm}
    \subfloat[]{
        \includegraphics[width=0.32\linewidth]{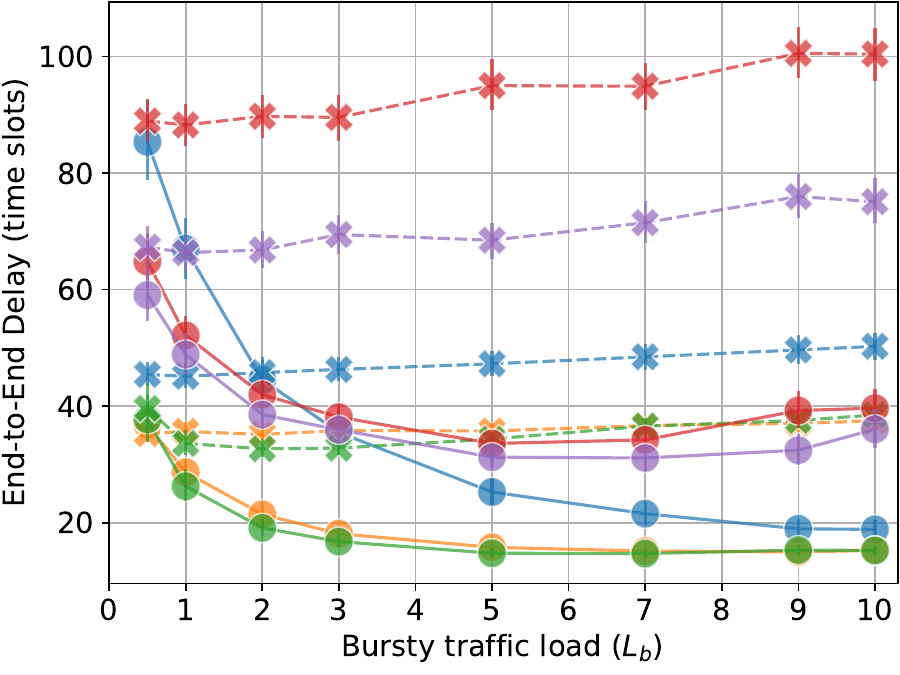}
        \label{fig:mixed:delayalllink}
    }
    \hspace{-3mm}
    \subfloat[]{
        \includegraphics[width=0.32\linewidth]{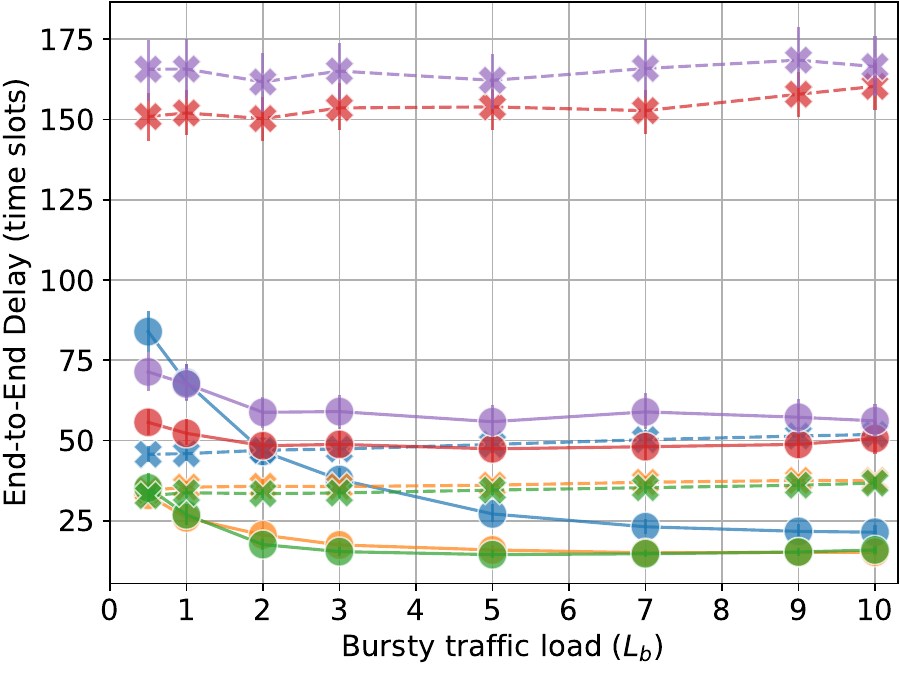}
        \label{fig:mixed:delaybw}
    }
    \hspace{-3mm}
    \subfloat[]{
        \includegraphics[width=0.32\linewidth]{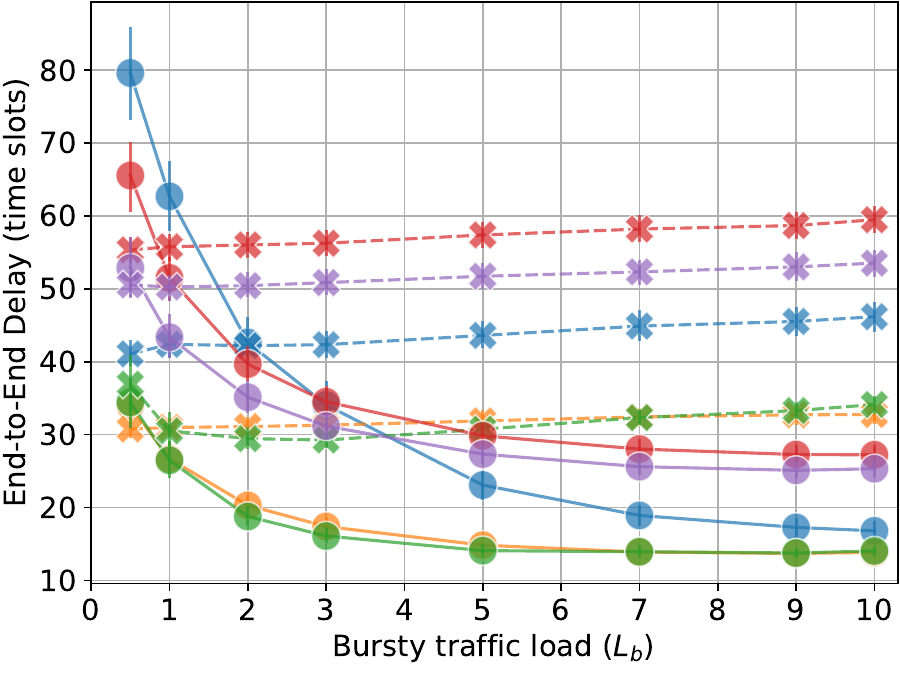}
        \label{fig:mixed:delaylocal}
    }

    % Small vertical space between rows
    \vspace{1mm}

    % ---------- Row 2 ----------
    \hspace{-3mm}
    \subfloat[]{
        \includegraphics[width=0.32\linewidth]{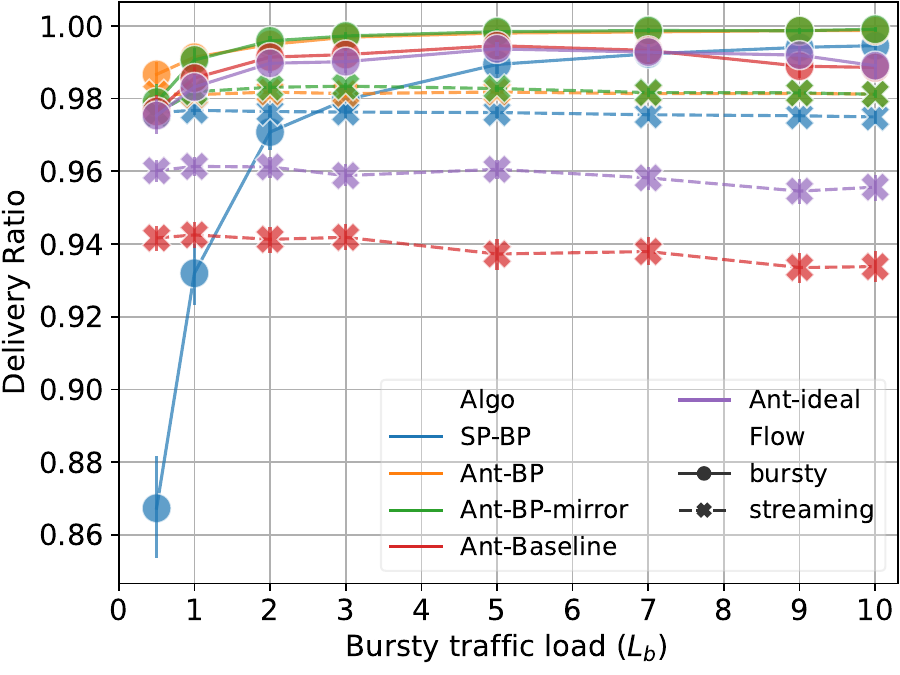}
        \label{fig:mixed:deliveryalllink}
    }
    \hspace{-3mm}
    \subfloat[]{
        \includegraphics[width=0.32\linewidth]{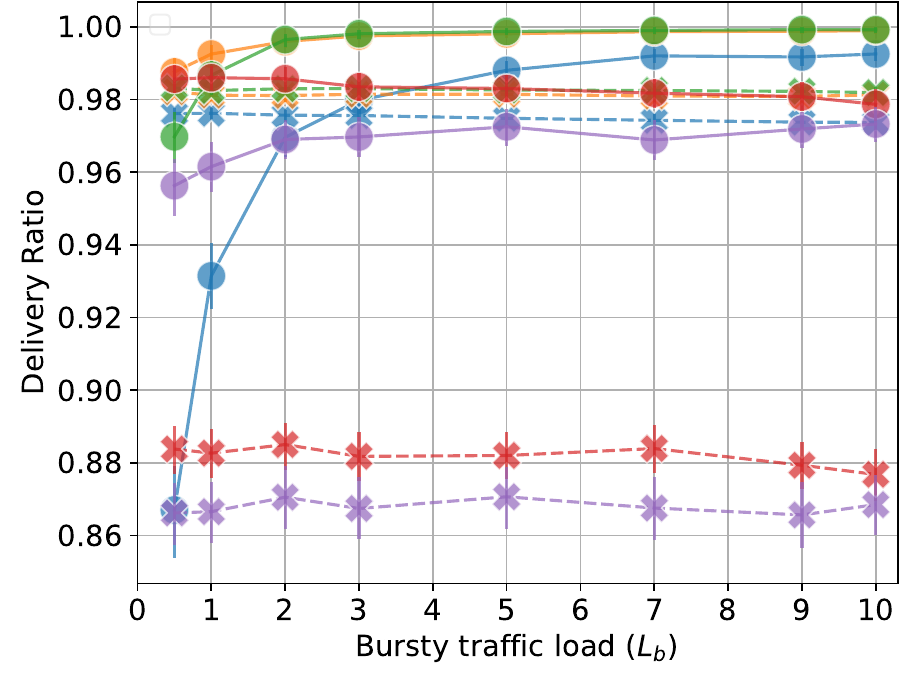}
        \label{fig:mixed:deliverybw}
    }
    \hspace{-3mm}
    \subfloat[]{
        \includegraphics[width=0.32\linewidth]{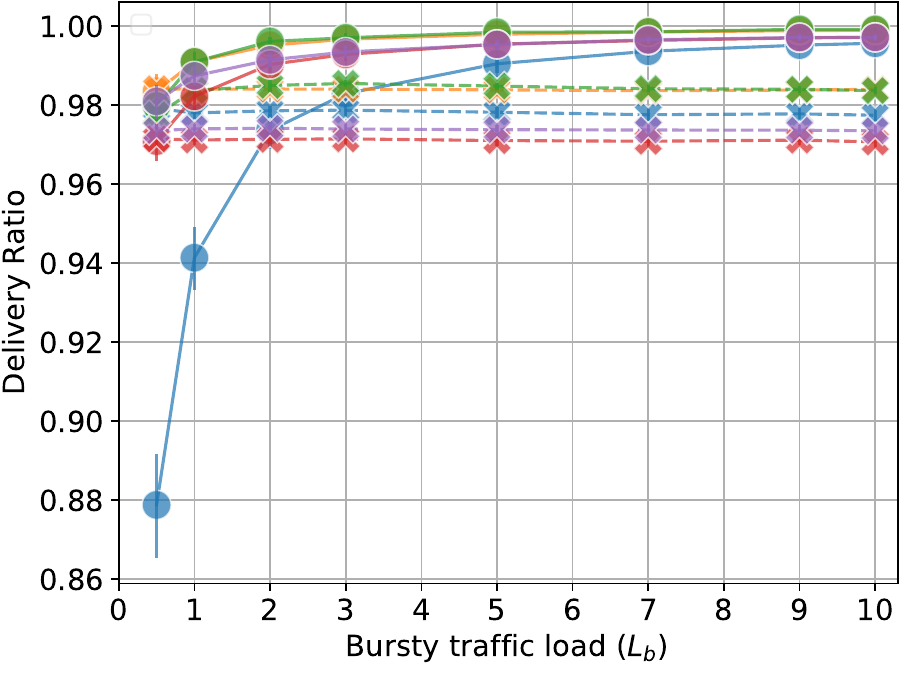}
        \label{fig:mixed:deliverylocal}
    }

    \caption{The average (a, b, c) end-to-end latency and (d, e, f) delivery ratio of tested routing policies as a function of bursty flow load $ L_b $ in 10 random instances of wireless networks of 100 nodes under mixed traffic setting with a constant streaming load  $L_s=1.0$. Subfigures (a, d), (b, e), and (c, f) correspond to the All-Links, BW-Persist, and Local-Persist link failure scenarios, respectively.
    }
    \label{fig:mixedfailure}
    \vspace{-0.1in}
\end{figure*}

\subsection{Resilience under Transient Link Failure}\label{sec:congestion}

We model transient link failure occurring after scheduling in \eqref{eq:policy} but before transmission, evaluating three disruption models. 
In \textit{All-Links}, every link fails independently per time slot with a random probability $p_e\in\mathbb{U}(0,0.05)$ fixed throughout the simulation. 
In \textit{BW-Persist}, we target the top $5\%$ of links by edge betweenness centrality on the connectivity graph $\mathcal{G}^n$.
Failure events on these critical links arrive via a Poisson process with rate $\lambda_e = \nicefrac{p_e}{D_{\text{avg}}}$, and persist for a normally distributed duration centered at $D_{\text{avg}}=20$ time slots.
In such failure events, transmission fails at a fixed probability $p_e\in\mathbb{U}(0,0.05)$.
In \textit{Local-Persist}, this same stochastic failure pattern applies to all links within a randomly selected circular region encompassing $5-6\%$ of network nodes. 
Upon any transmission failure, the affected packets revert to an unscheduled state for that time slot. 
For all ant-based schemes, the pheromone level on the failed link also decays by $5\%$ to discourage their subsequent selections.

To evaluate resilience, we simulate a mixed traffic setting with a fixed streaming load $L_s = 1.0$ and varying bursty loads $L_b \in \{0.5, 1, 2, \dots, 10\}$, and present their average end-to-end latency and delivery ratios as a function of $L_b$  in Figs.~\ref{fig:mixedfailure}.
Across all three failure scenarios, Ant-BP and Ant-BP-mirror consistently experience the least performance degradation, maintaining significantly lower delays and higher delivery ratios than the baseline schemes.

Under independent \textit{All-Links} failures (Figs.~\ref{fig:mixed:delayalllink} and~\ref{fig:mixed:deliveryalllink}), the gradient-based pathfinding of Ant-BP and Ant-BP-mirror yields decreasing delays and increasing delivery ratios for heavier bursty loads ($L_b \geq 7.0$), comfortably outperforming the Ant-Baseline and Ant-Ideal.
This stability advantage becomes highly pronounced under \textit{BW-Persist} failures (Figs.~\ref{fig:mixed:delaybw} and~\ref{fig:mixed:deliverybw}), where failures on critical links severely degrade the performance of Ant-Baseline and Ant-Ideal, underscoring their vulnerability to disruptions on critical paths.

While the ACO baselines appear relatively robust to the randomized \textit{Local-Persist} failures (Figs.~\ref{fig:mixed:delaylocal} and~\ref{fig:mixed:deliverylocal}), the strictly gradient-dependent SP-BP suffers severe starvation at low bursty loads ($L_b \leq 3.0$), causing significant performance drops.
Conversely, at these extremely low burst loads ($L_b \leq 1.0$), Ant-BP achieves higher bursty delivery rates than Ant-BP-mirror across all failure models.
This confirms the structural advantage of mapping lightweight bursty traffic to continuous virtual streaming flows, which stabilizes path formation and significantly mitigates the last-packet problem even in highly disruptive environments.

% In the localized failure scenario shown in Figs.~\ref{fig:mixed:delaylocal} and~\ref{fig:mixed:deliverylocal}, where random network regions (which may or may not include critical links) are affected, Ant-Baseline and Ant-Ideal appear relatively more robust than in the previous cases.
% At low bursty loads ($L_b \leq 2.0\text{--}3.0$)
% SP-BP exhibits higher delays and lower delivery ratios compared with Ant-Baseline and Ant-Ideal. 
% This behavior arises from its gradient-based design, which makes such schemes more susceptible to the last packet problem under light traffic conditions, unlike probabilistic approaches.
% When the bursty load is very low ($L_b \leq 1.0$), Ant-BP achieves higher delivery rates for bursty flows than Ant-BP-mirror across all failure scenarios. 
% This outcome highlights the advantage of treating all traffic as streaming during virtual routing, as in Ant-BP,
% which enables more stable path formation under sparse bursty activity and mitigates the last packet problem.

\begin{figure*}[!t]
    % ---------- Row 1 ----------
    \hspace{-3mm}
    \subfloat[]{
        \includegraphics[width=0.32\linewidth]{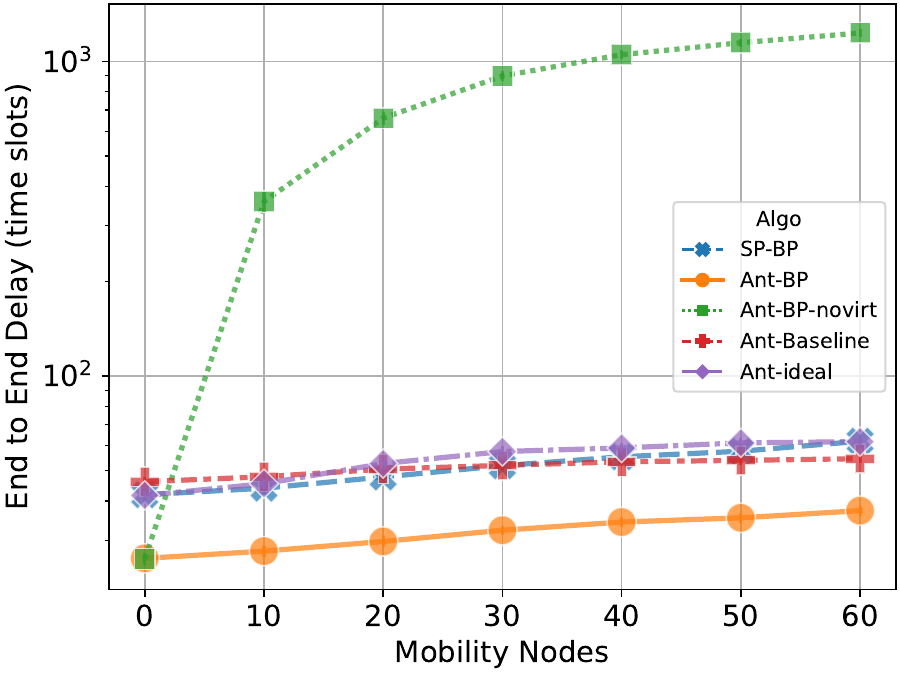}
        \label{fig:stream:Syndelay}
    }
    \hspace{-3mm}
    \subfloat[]{
        \includegraphics[width=0.32\linewidth]{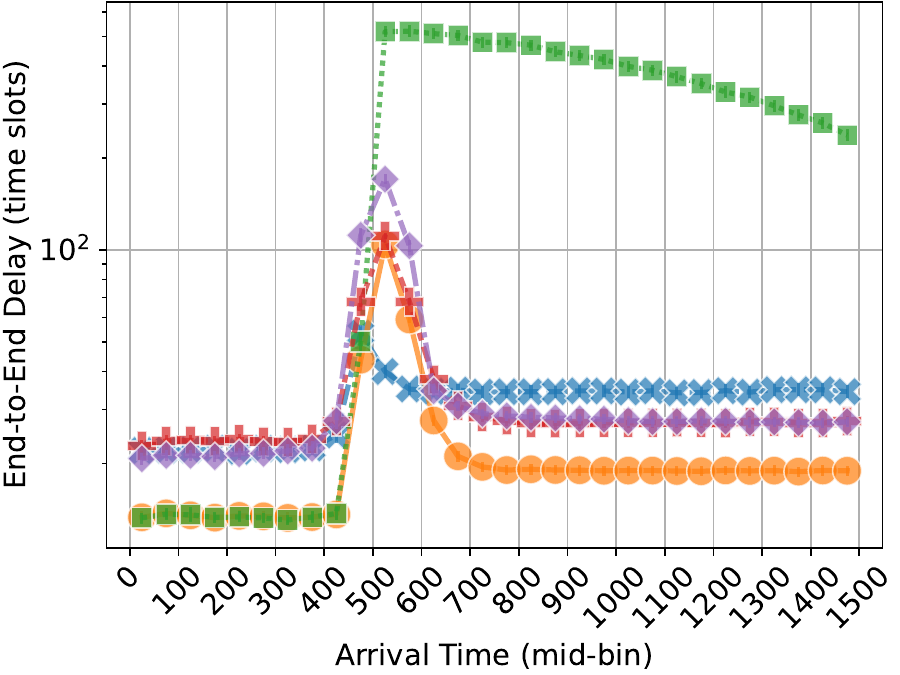}
        \label{fig:stream:delayarrival60}
    }
    \hspace{-3mm}
    \subfloat[]{
        \includegraphics[width=0.32\linewidth]{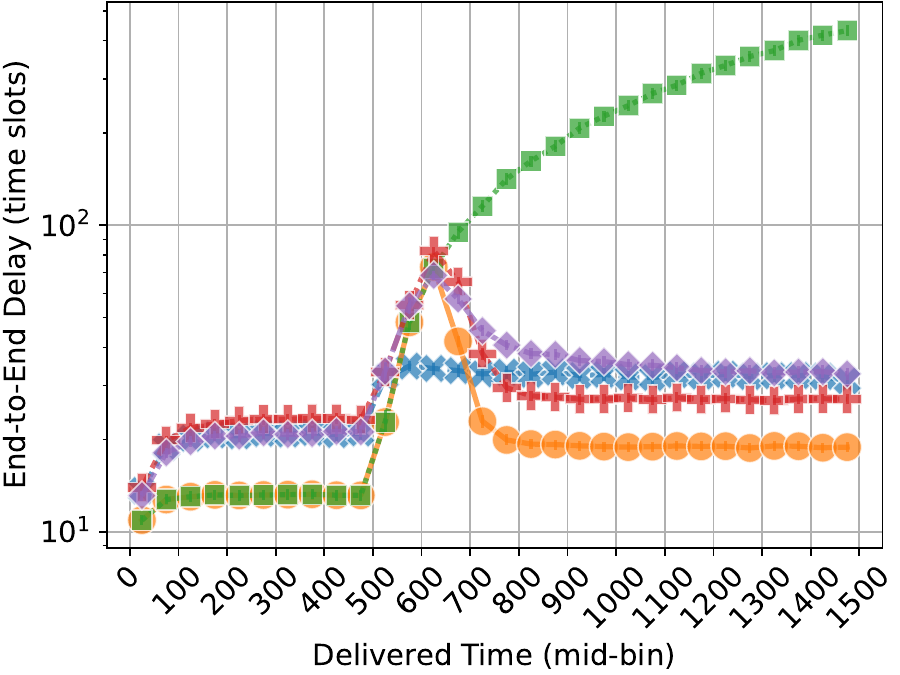}
        \label{fig:stream:delaydelivered60}
    }

    % Small vertical space between rows
    \vspace{1mm}

    % ---------- Row 2 ----------
    % \hspace{-5mm}
    % \subfloat[]{
    %     \includegraphics[width=0.34\linewidth]{Figures/Undelivered_streaming1.0_mobnodes60_mobility_count.pdf}
    %     \label{fig:stream:undelivercountmob60}
    % }
    % \hspace{-4.5mm}
    % \subfloat[]{
    %     \includegraphics[width=0.34\linewidth]{Figures/Delivered_streaming1.0_mobnodes60_mobility_count.pdf}
    %     \label{fig:stream:delivercountmob60}
    % }
    % \hspace{-5.5mm}
    % \subfloat[]{
    %     \includegraphics[width=0.34\linewidth]{Figures/Delivered_streaming1.0_mobnodes60_mobility_delay.pdf}
    %     \label{fig:stream:deliverdelaymob60}
    % }

    \caption{The average (a) end-to-end latency of all packets as a function of mobility level (number of mobile nodes), (b) end-to-end latency of delivered packets as a function of packet arrival time (binned in ranges of 50), and (c) end-to-end latency of delivered packets as a function of packet delivery time (binned in ranges of 50) for tested routing policies in 10 random instances of wireless networks of 100 nodes under streaming traffic setting with a constant streaming load $L_s=0.5$. 
    }
    \label{fig:mobility}
    \vspace{-0.1in}
\end{figure*}

\subsection{Adaptation under Note Mobility}\label{sec:results:mobility}

\begin{table}[h!]
\setlength{\tabcolsep}{5pt}
\centering
\caption{Link removal ratios across mobility levels.}
\label{tab:mobility_ratios}
\footnotesize
\begin{tabular}{lcccccc}
\toprule
Mobile Nodes & $10$ & $20$ & $30$ & $40$ & $50$ & $60$ \\
\midrule
Link Removal & $16.6\%$ & $33.3\%$ & $48.5\%$ & $61.0\%$ & $71.6\%$ & $80.4\%$ \\
\bottomrule
\end{tabular}
\end{table}

We simulate node mobility using a constrained Gaussian random walk model, relocating a random subset of nodes (e.g., $\{10, 20, 30, 40, 50, 60\}$) during physical routing with 2D steps  drawn from $\mathbb{N}(\mathbf{0}, 0.1^2\mathbf{I})$.
Crucially, these perturbations are bounded within the 2D square and restricted to preserve the connectedness of the network (e.g., $\ccalG^n$), while causing some existing links break and new ones form at high probabilities.
The link removal ratios by mobility levels are listed in Table~\ref{tab:mobility_ratios}.
Following the mobility adaptation mechanisms in Sec.~\ref{sec:adaptation}, stranded packets are remapped as virtual sources during the subsequent virtual routing phase to adaptively redistribute traffic and backlogs across the updated topology.
In our simulation horizon ($T=2000$), mobility triggers at $t=500$, prompting the scheduled virtual routing phase at $t=600$.
To account for signaling overhead, physical routing is paused for $T'=10$ time slots while virtual routing executes for $K=1000$ steps.
During this pause, newly arriving exogenous packets are queued at their source nodes, and physical routing resumes at $t=610$ with the updated pheromones. 
The choice of $T'$ depends on the actual MaxWeight scheduling overhead.
Although the evaluation only covers node movements in a single update cycle, this periodical process naturally extends to continuous long-term horizons and dynamic node churn (i.e., joining/leaving networks).
In addition, pure streaming traffic is used and the number of flows is drawn uniformly between $15$ and $30$. 

We compare Ant-BP against SP-BP, Ant-Baseline, Ant-Ideal, and an ablation variant, \textit{Ant-BP-novirt}, which lacks periodic virtual updates (new links instead initialize with the mean pheromone value, and $T'=0$).
Fig.~\ref{fig:stream:Syndelay} presents the end-to-end delay in a logarithmic scale as a function of mobile nodes.
The streaming load is $L_s=0.5$, and the latency of undelivered packets is set as their residency time ($T-$ injection time).
As expected, delay increases with the mobility degree, which induces link removal ratios ranging from $16.6\%$ ($10$ mobile nodes) to $80.4\%$ ($60$ mobile nodes).
Ant-BP achieves the best performance across all mobility levels, whereas Ant-BP-novirt severely degrades even under mild mobility (delay of $358.5$ time slots at $10$ mobile nodes), isolating periodic virtual routing as the key mechanism for topology adaptation.
While SP-BP and Ant-Ideal handle mild mobility well, their strict gradient reliance and path-dependent updates, respectively, cause them to falter at higher mobility degrees ($>20$ and $>40$ nodes), where Ant-Baseline slightly outperforms.
In the extreme $60$-node scenario, Ant-BP establishes the lowest delay ($37.2$), decisively outperforming Ant-Baseline ($54.5$), Ant-Ideal ($61.7$), SP-BP ($61.9$), and Ant-BP-novirt ($1231.6$).

Unlike the immediate per-link updates of Ant-Baseline, Ant-Ideal relies on end-to-end path-cost evaluations via proactive ants, rendering its pheromone distributions highly sensitive to mobility-induced link breakages.
Furthermore, Ant-Baseline's superior resilience compared to SP-BP at higher mobility degrees underscores the inherent advantage of probabilistic multi-path routing over strictly gradient-dependent methods in highly dynamic environments.

We further analyze the temporal evolution of delivered packets in the extreme $60$-node scenario, by presenting the end-to-end delay as a function of packet injection time (Fig.~\ref{fig:stream:delayarrival60}) and delivery time (Fig.~\ref{fig:stream:delaydelivered60}).
In these plots, the time horizon is divided into discrete bins of $50$ time slots (represented by their mid-bin values) to capture the moving average of routing delay across the adaptation timeline.

In Fig.~\ref{fig:stream:delayarrival60}, all schemes exhibit a sharp delay peak for packets injected immediately prior to the $t=500$ topology shift.
Following this event, the delay of Ant-BP-novirt remains permanently elevated, demonstrating the catastrophic failure of static probabilistic fields under topological changes.
Conversely, the slot-by-slot dynamic SP-BP recovers quickly, showing a drop in delay for packets injected shortly after the shift, but ultimately plateaus at a persistently higher delay ($\approx 34.4$) post-adaptation.
The periodic schemes (Ant-BP, Ant-Baseline, and Ant-Ideal) experience escalating delays for packets arriving during the $t \in [500, 600]$ window, but exhibit dramatic recovery for packets injected after the $t=610$ virtual update.
Ultimately, Ant-BP achieves the absolute lowest delay for packets arriving post-adaptation ($\approx 19.0$), comfortably outperforming Ant-Baseline ($\approx 27.2$) and Ant-Ideal ($\approx 27.4$).

From the perspective of delivery time (Fig.~\ref{fig:stream:delaydelivered60}), the delay surge for the periodic ant-based methods appears broader, spanning the $525 \le \text{mid-bin} \le 625$ window.
This prolonged hump reflects the gradual flushing of stranded packets that lingered in the network while awaiting the $t=610$ topology update, whereas the delivery delays of SP-BP remain entirely steady after its initial spike.
However, once these transient backlogs clear post-adaptation, Ant-BP conclusively yields the lowest delivery delay, followed by Ant-Baseline, SP-BP, and Ant-Ideal.
Collectively, these temporal dynamics reveal a fundamental routing trade-off: while strict gradient-based routing (SP-BP) provides immediate reactivity to mobility, it often settles into less efficient long-term paths.
By absorbing temporary congestion while awaiting a structured virtual update, Ant-BP establishes superior, globally optimal routing probabilities that definitively minimize long-term latency in dynamic networks.

\section{Conclusions and Future Work}

In this paper, we introduce Ant Backpressure (Ant-BP) routing, a distributed framework bridging the theoretical optimality and efficiency of gradient-based routing with the practical constraints of legacy forwarding architectures.
By enabling efficient link-capacity sharing and mapping short-lived bursty traffic into virtual streaming flows, Ant-BP structurally eliminates the last-packet problem inherent to SP-BP.
Furthermore, Ant-BP demonstrates superior resilience in highly dynamic mobile environments. It gracefully mitigates the disruptions from transient link failures that typically plague reactive protocols, and under extreme mobility scenarios, its periodic virtual routing mechanism trades temporary congestion for globally superior probabilistic paths. This yields significantly lower long-term latency than the myopic, slot-by-slot reactions of SP-BP.
The primary trade-off for this architectural simplicity is a slight goodput reduction under heavy traffic compared to SP-BP.

Future work will explore proactive route maintenance to enhance scalability and address continuous dynamic node churn in realistic deployments.
Additionally, our approach to approximating constrained multi-commodity min-cost flow (MCMCF) problems sheds light on resolving non-linearity induced by resource contention, offering insights not only for wireless routing but also for broader domains such as edge computing.

% In this study, we propose to improve the performance of short-lived traffic in dynamic wireless multi-hop networks, particularly under the interference of streaming traffic and in the presence of link failures and node mobility.
% To achieve this, we integrate the superior path-finding capability of SP-BP, which maximizes queue stability, into ACO routing with FIFO link scheduling.
% This approach enables packets of different commodities to share link capacity efficiently and mitigates the last-packet problem for short-lived flows.
% Our proposed scheme is shown to be effective for mixed traffic and robust to mismatched virtual flow configurations, while maintaining robustness across a range of link failure scenarios.
% It also effectively adapts to high degrees of node mobility through its periodic virtual route updates and can achieve throughput similar to SP-BP under low-to-medium traffic intensity.
% However, the cost of these benefits is lower throughput under heavier traffic. 
% Future work includes developing continuous route maintenance mechanisms to improve throughput under heavier traffic and mobility scenarios characterized by the addition and removal of nodes, alongside changes in link connectivities, to better reflect realistic ad-hoc wireless dynamic networks.

\appendix
\section{Ant Colony Optimization}
% \subsection{ACO baseline}
\label{app:aco}
The baseline ACO heuristic updates routing policy as follows.  
At time step $t$, a data packet that has arrived at a non-destination node $i \in \mathcal{V}$ is assigned to one of its neighbors $j\in\ccalN_{\ccalG^n}(i)$ with probability $p_{ij}^{(c)}(t)$ given by~\cite{zhang2017survey}
\begin{equation}
    \label{eq:sgs}
    p_{ij}^{(c)}(t) = \frac{\big[\rho_{ij}^{(c)}(t)\big]^\alpha \cdot h_{ij}^\beta}{\sum_{l\in\ccalN_{\ccalG^n}(i)} \big[\rho^{(c)}_{il}(t)\big]^{\alpha} \cdot h_{il}^{\beta}},
\end{equation}
where $\rho^{(c)}_{ij}(t)$ is the pheromone intensity associated with commodity $c$ on link $({ i, j})$ at time step $t$, $h_{ij}$ is a heuristic cost of link $({ i, j})$, and $\alpha$ and $\beta$ are parameters for tuning the importance of the pheromone intensity and cost. 
%heuristic information, respectively. 
Assuming that $m(c,t)$ ants reach destination $c$ at time step $t$, the pheromone intensity is updated as~\cite{zhang2017survey}
\begin{subequations}
\begin{align}
    \rho^{(c)}_{ij}(t+1) &= (1 - \varepsilon) \cdot \rho^{(c)}_{ij}(t) + \sum_{k=1}^{m(c,t)} \theta_{ij,k}^{(c)}(t),\label{eq:ph_update}\\
    \theta_{ij,k}^{(c)}(t) &= \begin{cases} 
    \left[\phi\big(\ccalP^{(c)}_k\big)\right]^{-1}, & \text{if } ({ i, j})\in \ccalP^{(c)}_k \\ 
    0, & \text{if } ({ i, j})\notin \ccalP^{(c)}_k  
    \end{cases},\label{eq:ph_update:delta}
\end{align}    
\end{subequations}
where $\varepsilon$ is the evaporation rate, $\theta_{ij,k}^{(c)}(t)$ is the amount of pheromone deposited by the $k$th ant, $\ccalP^{(c)}_k $ is the path of the $k$th ant, and $\phi\big(\ccalP^{(c)}_k\big)$ is the cost function of path $\ccalP^{(c)}_k $.

\printcredits

\section*{Acknowledgments}
Research was sponsored by the U.S. Army Combat Capabilities Development Command (DEVCOM) Army Research Office and was accomplished under Cooperative Agreement Number W911NF-24-2-0008 and W911NF-19-2-0269. The views and conclusions contained in this document are those of the authors and should not be interpreted as representing the official policies, either expressed or implied, of the Army Research Office or the U.S. Government. The U.S. Government is authorized to reproduce and distribute reprints for Government purposes notwithstanding any copyright notation herein.
% Elsevier-specific back matter, typically acknowledgements / declarations.
% \input{shared/backmatter_elsevier.tex}
\bibliography{references}

\end{document}